\numberwithin{equation}{section}
\newtheorem{thm}{Theorem}[section]
\newtheorem{lem}[thm]{Lemma}
\newtheorem{pro}[thm]{Proposition}
\newtheorem{cor}[thm]{Corollary}
\newtheorem{rem}[thm]{Remark}
\newtheorem{assu}[thm]{Assumption}
\newcommand{\be}{\begin{equation}}
\newcommand{\ee}{\end{equation}}
\newcommand{\bea}{\begin{eqnarray*}}
\newcommand{\eea}{\end{eqnarray*}}
\newcommand{\lone}{\mathbbm{1}}
\newtheorem{defn}[thm]{Definition}
\theoremstyle{remark}
\newcommand{\Rmnum}[1]{\expandafter\@slowromancap\romannumeral #1@}
\renewcommand{\thefootnote}{\fnsymbol{footnote}}
\newcommand{\beqa}{\begin{eqnarray}}
\newcommand{\eeqa}{\end{eqnarray}}
\renewcommand{\Im}{\mathsf{Im}}
\newcommand{\Tr}{\mathrm{Tr}}
\newcommand{\caH}{\mathcal{H}}
\newcommand{\ii}{\mathrm{i}}
\newcommand{\dd}{\mathrm{d}}
\newcommand{\im}{{\mathfrak{Im} \, }}
\newcommand{\E}{{\mathbb E }}
\newcommand{\R}{{\mathbb R }}
\newcommand{\N}{{\mathbb N}}
\newcommand{\Z}{{\mathbb Z}}
\renewcommand{\P}{{\mathbb P}}
\newcommand{\C}{{\mathbb C}}
\renewcommand{\im}{\Im}
\newcommand{\nc}{\normalcolor}
\begin{document}

 \begin{minipage}{0.85\textwidth}
 \vspace{2.5cm}
 \end{minipage}
\begin{center}
\large\bf
 Equipartition principle for Wigner matrices 
\end{center}

\renewcommand{\thefootnote}{\fnsymbol{footnote}}	
\vspace{1cm}
\begin{center}
 \begin{minipage}{0.32\textwidth}
\begin{center}
Zhigang Bao\footnotemark[1]  \\
\footnotesize {HKUST}\\
{\it mazgbao@ust.hk}
\end{center}
\end{minipage}
\begin{minipage}{0.32\textwidth}
\begin{center}
L\'aszl\'o Erd{\H o}s\footnotemark[2]  \\
\footnotesize {IST Austria}\\
{\it lerdos@ist.ac.at}
\end{center}
\end{minipage}
\begin{minipage}{0.33\textwidth}
 \begin{center}
Kevin Schnelli\footnotemark[3]\\
\footnotesize 
{KTH Royal Institute of Technology}\\
{\it schnelli@kth.se}
\end{center}
\end{minipage}
\footnotetext[1]{Supported in parts by Hong Kong RGC Grant GRF 16301519, and NSFC 11871425 }
\footnotetext[2]{Supported in parts by ERC Advanced Grant RANMAT No.\ 338804.}
\footnotetext[3]{Supported in parts by the Swedish Research Council Grant VR-2017-05195 and the Wallenberg AI, Autonomous Systems and Software Program.}

\renewcommand{\thefootnote}{\fnsymbol{footnote}}	

\end{center}

\vspace{1cm}

\begin{center}
 \begin{minipage}{0.83\textwidth}\footnotesize{
 {\bf Abstract.}   
  We prove that the energy of any eigenvector of a sum of several independent large Wigner matrices
   is equally distributed among these  matrices
 with very high precision. This shows a particularly strong microcanonical form of
 the equipartition principle for quantum systems whose components are modelled by Wigner matrices.
}
\end{minipage}
\end{center}

 \vspace{5mm}
 
 {\small
\footnotesize{\noindent\textit{Date}: August 17, 2020}\\
\footnotesize{\noindent\textit{Keywords}:  Equipartition of energy, Wigner matrices, cumulant expansion} \\
\footnotesize{\noindent\textit{AMS Subject Classification (2020)}: 	60B20, 82B10}
 
 \vspace{2mm}

 }

\thispagestyle{headings}

\section{Introduction}

 Equipartition of energy is a general principle in classical statistical physics stating that in 
an ergodic system at equilibrium  the total energy is shared equally among the elementary
degrees of freedom. In quantum systems equipartition
 breaks down  at very low temperatures. Even at 
higher temperatures there is no general quantum counterpart of this principle apart from the standard quantum virial theorem that only relates the total kinetic energy 
to a certain derivative of the potential.
Nevertheless, in some special cases this principle
 could be verified; see \cite{BSL19} and references therein for
an extensive physics literature on the popular model of a single quantum particle
in contact with a quantum heat bath consisting of infinitely many harmonic oscillators.  
In the current paper we show that for Wigner random matrices, i.e. for a mean-field quantum system
with random quantum transition rates,  a particularly strong microcanonical 
form of the quantum equipartition holds: it is valid separately for every eigenvector.  

More precisely,  suppose that  the total Hamiltonian of a quantum system 
is represented by a sum of independent $N\times N$ Wigner matrices $ H = H_1 + H_2+ \ldots + H_k$, where
each $H_\iota$ represents the Hamiltonian of a subsystem.
Let $w=(w(1), \ldots, w(N))^\top\in \C^N$ be an $\ell^2$-normalized eigenvector of $H$ with eigenvalue $\lambda$, i.e.\ $Hw=\lambda w$.
The eigenvalue $\lambda$ is the total energy of $w$:
\begin{align*}
    \lambda= E(w):= (w, H w) = \sum_{\iota=1}^k ( w, H_\iota w)\,.
    \end{align*}
The energy of the $\iota$-th subsystem $H_\iota$ in the state given by $w$ is $E_\iota(w):=( w, H_\iota w)$. 
Our main result asserts that 
\begin{equation}\label{eqp}
   E_\iota(w) \approx \frac{E(w)}{k}, \qquad \forall \iota=1, 2,\ldots k\,,
\end{equation}
with very high precision and with very high probability. In other words,
 the total energy is equally distributed among the $k$ subsystems.

Fine properties of eigenvectors of large Wigner matrices have been extensively studied in the recent years.
They are {\it delocalized} i.e.\ $\max_i |w(i)|\le N^{-1/2+\epsilon}$ for any fixed $\epsilon>0$ with 
very high probability as $N$ tends to infinity. Delocalization  follows
directly from the optimal local law, see e.g.\ \cite{EKYY}, and~\cite{BeLo20} for an optimal rate.
 Moreover, the eigenvectors are asymptotically normal,
in the sense that for any fixed deterministic vector $q\in \C^N$ the moments of $\sqrt{N}|( q, w)|$ 
coincide with those of the modulus of a standard Gaussian \cite{BY, KY, TV}.
A multi-variate extension involving the joint moments
of several eigenvectors also holds \cite{BY}. Furthermore, the {\it quantum unique ergodicity}  is also valid, 
stating that 
\begin{equation}\label{que}
\sum_{i\in J} |w(i)|^2 \approx \frac{|J|}{N}\,,
\end{equation}
 for any deterministic subset $J\subset\{ 1,2, \ldots, N\}$; see 
\cite{Be19,BY,MY20}.  The  key difficulty in these latter results was to prove them 
{\it microcanonically}, i.e.\ for each eigenvector; this required the 
sophisticated equilibration mechanism of the Dyson Brownian motion. In contrast, the local law 
(see Theorem~\ref{theorem local law1} later) directly implies
the analogous results 
for a spectral projection on mesoscopic scale, e.g.
\begin{equation}\label{queav}
\frac{1}{2N^\epsilon}
  \sum_{|\alpha-\alpha_0|\le N^\epsilon}  \sum_{i\in J} |w_\alpha(i)|^2 \approx \frac{|J|}{N}\,,
\end{equation}
instead of \eqref{que}, 
 involving an average over many eigenvectors 
$w_\alpha$ with eigenvalues $\lambda_\alpha$ near $\lambda_{\alpha_0}$ with a fixed $\alpha_0$.
Here the eigenvalues $\lambda_\alpha$ are indexed in an increasing order, $\lambda_1\le \lambda_2\le \ldots \le\lambda_N$.

In all these previous results the eigenvector was tested against a specific {\it deterministic} observable; while in
the equipartition  relation \eqref{eqp} we consider the quadratic form of $w$ with a {\it random} $H_\iota$
that is far from being independent of $w$. 
Given the complicated dependence between $w$ and $H_\iota$ 
it is somewhat surprising that the proof of \eqref{eqp} is simpler 
 than that of \eqref{que}.
In fact, despite this dependence, 
we can still directly handle $(w, H_\iota w)$ for an individual eigenvector, i.e.  
we {\it do not need} to establish first a spectrally local-averaged version of \eqref{eqp} in the form
\begin{align*}
 \frac{1}{2N^\epsilon}
  \sum_{|\alpha-\alpha_0|\le N^\epsilon} (w_\alpha, H_\iota w_\alpha)\approx \frac{\lambda_{\alpha_0}}{k}
\end{align*}
 and then prove that
$ ( w_\alpha, H_\iota w_\alpha)$ does not change much if the eigenvalue $\lambda_\alpha$
remains close to a fixed energy. 

The main reason for the simple proof is algebraic. Consider $k=2$ for simplicity. It turns out that the quadratic
forms of $\mathcal{H}:=H_1- H_2$ are especially small due to a strong algebraic cancellation in the cumulant expansion.
Once the smallness of $( w, \mathcal{H} w) = ( w, H_1w) - ( w, H_2w)$
is established, \eqref{eqp} follows from $\lambda =  ( w, H_1w) + (w, H_2w)$.

To demonstrate the central role of $\mathcal{H}$, in the next section
we first give the proof of \eqref{eqp} for $k=2$ in the Gaussian case,
where the mechanism is especially elementary. Then we introduce the general model and properly state our
result in Section~\ref{sec:def}. After collecting  some preliminaries from earlier papers in Section~\ref{sec:prelim},
we will prove our main theorem starting in Section~\ref{section proof of maintheorem} for the complex Hermitian  case
under the additional condition $\E h_{\iota,ij}^2=0$  on  the entries of each matrix~$H_\iota$. 
This condition is removed in Section~\ref{remark about variance complex case}.
The necessary modifications for the real symmetric case are presented in Section~\ref{le real symmetric case}.

\section{A simple proof of \eqref{eqp} for the Gaussian case and $k=2$}\label{sec:gauss}

Assume we are given two independent GUE random matrices $H_{1}$ and $H_{2}$ of size $N\times N$,
i.e.\ their  entries are two sets of independent complex centered Gaussian random variables of variance $\frac{1}{2N}$ subject to the symmetry constraint $H_1=H_1^*$ and $H_2=H_2^*$. Then clearly the sum
\begin{align}
 H:=H_1+H_2
\end{align}
also belongs to the standard Gaussian unitary ensemble (GUE).
Denote by $(\lambda_\alpha)_{\alpha}$ the eigenvalues in ascending order of $H$ and let $(w_\alpha)_{\alpha}$ be an associated normalized eigenbasis, i.e.\  we have $(w_\alpha, Hw_\beta)=\delta_{\alpha\beta}\lambda_\alpha$, for any choice of indices $\alpha,\beta$. 

Consider now the random variables
\begin{align*}
( w_\alpha, H_1w_\beta)-\frac{\delta_{\alpha\beta}}{2}\lambda_\alpha\,.
\end{align*}
We claim that, for any $N$, these random variables are Gaussian.
\begin{lem}\label{le lemma gaussian}
For any choice of index $\alpha$ the random variable 
\begin{align}\label{le lemma gaussian equation}
 (w_\alpha,H_1w_\alpha)-\frac{1}{2}\lambda_\alpha
 \end{align}
 is a centered real Gaussian random variable with variance $\frac{1}{4N}$, for any $N$. Moreover, for any choice of indices $\alpha$ and $\beta$, with $\alpha\not=\beta$, the random variable
 \begin{align}\label{le lemma gaussian equation 2}
  (w_\alpha,H_1w_\beta)
 \end{align}
is a centered complex Gaussian random variable of variance $\frac{1}{4N}$, for any $N$.
\end{lem}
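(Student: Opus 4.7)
The plan rests on the algebraic observation highlighted in the introduction: introduce the difference
\begin{equation*}
\mathcal{H} := H_1 - H_2,
\end{equation*}
so that $H_1 = \frac{1}{2}(H+\mathcal{H})$. Taking the $(w_\alpha,w_\beta)$-matrix element of this identity immediately yields
\begin{equation*}
(w_\alpha, H_1 w_\beta) = \frac{\delta_{\alpha\beta}}{2}\lambda_\alpha + \frac{1}{2}(w_\alpha, \mathcal{H}w_\beta),
\end{equation*}
so both claims of the lemma reduce to controlling the distribution of $(w_\alpha, \mathcal{H} w_\beta)$.

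First I would show that $H$ and $\mathcal{H}$ are \emph{two independent GUE matrices}, each with the standard entry variance $1/N$. That $\mathcal{H}$ is GUE with entry variance $1/N$ is clear by summing the variances of $H_{1,ij}$ and $H_{2,ij}$. Independence is verified by checking that all joint Gaussian covariances between the entries of $H$ and those of $\mathcal{H}$ vanish: this is immediate from the independence of $H_1, H_2$ and the equality of their entry variances, since $\E[H_{1,ij}\overline{H_{1,ij}}]-\E[H_{2,ij}\overline{H_{2,ij}}]=0$, and because the GUE off-diagonal entries satisfy $\E[H_{\iota,ij}^2]=0$, so the pseudo-covariance also cancels.

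Next I would exploit the unitary invariance of GUE together with this independence. Let $U$ be the unitary whose columns are the eigenvectors $(w_\alpha)_\alpha$; then $U$ is a measurable function of $H$ alone, hence independent of $\mathcal{H}$. Conditionally on $U$, the matrix $U^*\mathcal{H}U$ has the same law as $\mathcal{H}$ by unitary invariance of the GUE distribution, so
\begin{equation*}
(w_\alpha,\mathcal{H}w_\beta) = (U^*\mathcal{H}U)_{\alpha\beta} \stackrel{d}{=} \mathcal{H}_{\alpha\beta}.
\end{equation*}
For $\alpha=\beta$ the right-hand side is a centered real Gaussian of variance $1/N$, and for $\alpha\neq\beta$ it is a centered circular complex Gaussian of variance $1/N$. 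Combining this with the displayed identity for $(w_\alpha, H_1 w_\beta)$ and inserting the factor $\tfrac{1}{2}$ yields precisely the two distributional claims with variance $1/(4N)$.

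The only delicate point in this plan is the joint independence of $H$ and $\mathcal{H}$: it requires checking both the ordinary covariance $\E[\cdot\,\overline{\cdot}]$ and the pseudo-covariance $\E[\cdot\,\cdot]$ between all pairs of entries, and crucially uses the GUE property $\E[H_{\iota,ij}^2]=0$. Everything else is a routine consequence of GUE invariance, which is why the argument stays so short in this Gaussian model.
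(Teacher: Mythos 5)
Your proof is correct and rests on the same key decomposition as the paper's: write $H_1 = \tfrac12(H+\mathcal{H})$ with $\mathcal{H}=H_1-H_2$, observe that $H$ and $\mathcal{H}$ are two independent GUE matrices, and reduce the claim to the distribution of $(w_\alpha,\mathcal{H}w_\beta)$. The only divergence is in the final step: the paper observes that, conditionally on $H$, $(w_\alpha,\mathcal{H}w_\beta)$ is a Gaussian linear form whose variance it computes directly from the $\ell^2$-normalization of the eigenvectors, obtaining the constant value $1/N$; you instead invoke unitary invariance of the GUE distribution to conclude $(w_\alpha,\mathcal{H}w_\beta)\stackrel{d}{=}\mathcal{H}_{\alpha\beta}$ and simply read off the distribution of a single matrix entry. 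Both routes are valid, and yours is arguably slicker since it delivers the full conditional law in one stroke; you should only note in passing that the eigenvector matrix $U$ is defined by $H$ only up to column phases (and ordering at coincident eigenvalues), so a measurable convention must be fixed before calling $U$ a function of $H$ — harmless here because the circular symmetry of the complex Gaussian makes the conclusion insensitive to the phase choice.
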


\begin{proof}
Introduce the auxiliary matrix
\begin{align}\label{le easy auxH}
\mathcal{H}:=H_1- H_2\,,
\end{align}
whose entries are also independent centered Gaussian random variables, up to the symmetry constraint, with variance $\E|\mathcal{H}_{ij}|^2=\frac{1}{N}$. 
A simple calculation then shows that $\E [H_{ij}\mathcal{H}_{ab}]=0$, for all $i,j,a,b\in\llbracket 1,N\rrbracket$, hence the matrices $H$ and $\caH$ are independent. In particular, $\caH$ is independent from $w_\alpha$ and $w_\beta$, for any choice of $\alpha,\beta$.

Observe now that we can write the random variables in~\eqref{le lemma gaussian equation} and~\eqref{le lemma gaussian equation 2} as
\begin{align}\label{le baby computation}
(  w_\alpha, H_1 w_\beta)-\frac{1}{2}\lambda_\alpha\delta_{\alpha\beta}&=( w_\alpha , H_1 w_\beta)-\frac12 (w_\alpha, H_1w_\beta)-\frac12(w_\alpha,H_2w_\beta)=\frac{1}{2}(w_\alpha,\mathcal{H}w_\beta)\,.
\end{align}
Hence by the independence of $H$ and $\caH$ we conclude that $\frac{1}{2}w_\alpha^*\mathcal{H}w_\beta$ is a Gaussian random variable. Since $\E \caH_{ij}=0$, it follows that $\E w_\alpha^*\mathcal{H}w_\beta=0$. Further we have
\begin{align*}
 \E |w_\alpha^*\mathcal{H}w_\beta|^2=\sum_{ijab}\E\overline{w_\alpha(i)}\caH_{ij}w_\beta(j)w_\alpha(a)\overline{\caH_{ab}}\,\overline{w_\beta(b)}=\sum_{ijab}\frac{1}{N}\delta_{ia}\delta_{jb}|w_\alpha(a)|^2|w_\beta(j)|^2=\frac{1}{N}\,,
\end{align*}
where we used independence and that the eigenvectors are $\ell^2$-normalized. 
 The notation $\sum_{ijab}$ means that we sum over all indices from $1$ to $N$.
This shows~\eqref{le lemma gaussian equation} and~\eqref{le lemma gaussian equation 2}.
\end{proof}

{\it Notation:} 
The symbol $O(\,\cdot\,)$ stands for the standard big-O notation. We use~$c$ and~$C$ to denote positive finite constants that do not depend on the matrix size~$N$. Their values may change from line to line. 
We use double brackets to denote index sets, i.e.\ for $n_1, n_2\in\R$, $\llbracket n_1,n_2\rrbracket:= [n_1, n_2] \cap\Z$.

For vectors $v,w\in \mathbb{C}^N$, we write $v^*w=(v,w)$ for their scalar product. For an $N$ by $N$ matrix~$A$, we denote by $\|A\|$ its operator norm and by $\|A\|_\infty:=\max_{ij}|A_{ij}|$.  We use $\langle A\rangle :=\frac{1}{N}\sum_{i} A_{ii}$ to denote the normalized trace of an $N\times N$ matrix $A=(A_{ij})_{N,N}$.

\section{Definitions and results}\label{sec:def}

In this section we introduce the model and our main result on equipartition.
\begin{assu}\label{assumption 1}
 Fix an integer $k\ge 2$. Let  $H_\iota:=(h_{\iota,ij})$, $\iota=1,2,\ldots k$, be $k$ independent complex Hermitian Wigner matrices of size $N\times N$, i.e.,  we assume that their entries are independent centred random variables, up to the symmetry constraints $h_{\iota,ij}=\overline{h_{\iota,ji}}$, satisfying
\begin{align}\label{le variance complex case}
\mathbb{E}|h_{\iota,ij}|^2&=\frac{1}{N}\,,\qquad\qquad 1\leq i,j\leq N\,,\qquad \iota=1,\ldots k\,,
\end{align}
and the families of random variables $\{h_{\iota,ij}\}$ have finite moments to all order, i.e., for each $m\ge 3$ there is a positive constant $C_m$ such that
\begin{align}\label{entries moment bounds}
 \E|\sqrt{N}h_{\iota,ij}|^m\le C_m\,,\qquad\qquad m\ge 3\,,\qquad \iota=1,2,\ldots k\,.
\end{align}
\end{assu}

For the main part of the paper we 
 assume that $H_\iota$  are complex Hermitian matrices.  This assumption  is only for simplicity
 of the presentation; our result holds and the proof  also applies \nc
  with minor changes to the real symmetric setup as well; see Remark~\ref{remark 1}.

Choose now $k$  possibly $N$-dependent numbers $\sigma_\iota\ge 0$ such that 
\begin{align}\label{le sigma condition}
\sum_{\iota=1}^k\sigma_\iota^2=1\,,
\end{align}
and consider the random matrix
\begin{align}\label{le H}
H:= \sum_{\iota=1}^k\sigma_\iota H_\iota\,.
\end{align}
To present our results, we use the following definition of high-probability estimates.\nc

\begin{defn}\label{definition of stochastic domination}
Let $\mathcal{X}\equiv \mathcal{X}^{(N)}$ and $\mathcal{Y}\equiv \mathcal{Y}^{(N)}$ be two sequences of
 nonnegative random variables. We say that~$\mathcal{Y}$ stochastically dominates~$\mathcal{X}$ if, for all (small) $\epsilon>0$ and (large)~$D>0$,
\begin{align}\label{le proba est}
\P\big(\mathcal{X}^{(N)}>N^{\epsilon} \mathcal{Y}^{(N)}\big)\le N^{-D},
\end{align}
for sufficiently large $N\ge N_0(\epsilon,D)$, and we write $\mathcal{X} \prec \mathcal{Y}$ or $\mathcal{X}=O_\prec(\mathcal{Y})$.
 When
$\mathcal{X}^{(N)}$ and $\mathcal{Y}^{(N)}$ depend on a parameter $v\in \mathcal{V}$ (typically an index label or a spectral parameter), then $\mathcal{X}(v) \prec \mathcal{Y} (v)$, uniformly in $v\in \mathcal{V}$, means that the threshold $N_0(\epsilon,D)$ can be chosen independently of $v$. 
\end{defn}
We often use the notation $\prec$ also for deterministic quantities, then $\mathcal{X}^{(N)}\le N^{\epsilon} \mathcal{Y}^{(N)}$ holds with probability one. Stochastic domination has the following properties.
\begin{lem}\label{dominant}(Proposition 6.5 in \cite{book})
	\begin{enumerate}
		\item $X \prec Y$ and $Y \prec Z$ imply $X \prec Z$;
		\item If $X_1 \prec Y_1$ and $X_2 \prec Y_2$, then $X_1+X_2 \prec Y_1+Y_2$ and $X_1X_2 \prec Y_1Y_2;$
		\item  If $X \prec Y$, $\E Y \geq N^{-c_1}$ and $|X| \leq N^{c_2}$ almost surely with fixed constants $c_1$ and $c_2$, then we have $\E X \prec \E Y$.
	\end{enumerate}
\end{lem}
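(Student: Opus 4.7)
The plan is to prove each of the three claims directly from Definition~\ref{definition of stochastic domination} by means of union bounds and elementary probabilistic estimates; none of these requires machinery beyond the definition itself, so the proof will be short. Throughout, I fix arbitrary $\epsilon>0$ and $D>0$ and aim to verify the bound~\eqref{le proba est} for the relevant quantities.

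For part~(1), I would apply the hypotheses $X\prec Y$ and $Y\prec Z$ at level $\epsilon/2$ in place of $\epsilon$ (and at level $D$). This gives, for $N$ large,
\begin{equation*}
\P\bigl(X>N^{\epsilon/2}Y\bigr)\le N^{-D},\qquad \P\bigl(Y>N^{\epsilon/2}Z\bigr)\le N^{-D}.
\end{equation*}
On the intersection of the complementary events one has $X\le N^{\epsilon/2}Y\le N^{\epsilon}Z$, so a union bound yields $\P(X>N^{\epsilon}Z)\le 2N^{-D}$, which (after a routine adjustment of $D$) gives $X\prec Z$.

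For part~(2), I would use the same union-bound strategy. Applying $X_i\prec Y_i$ at level $\epsilon$ for $i=1,2$ and taking complements shows that with probability at least $1-2N^{-D}$ we have simultaneously $X_1\le N^{\epsilon}Y_1$ and $X_2\le N^{\epsilon}Y_2$. On this event $X_1+X_2\le N^{\epsilon}(Y_1+Y_2)$, giving the sum bound. For the product, I would apply the hypotheses at level $\epsilon/2$ instead, so that on the good event $X_1X_2\le N^{\epsilon}Y_1Y_2$. Both conclusions then follow after relabeling $D$.

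For part~(3), which is the only genuinely nontrivial item, I would split the expectation along the event $\mathcal{A}:=\{X\le N^{\epsilon}Y\}$ and its complement:
\begin{equation*}
\E X=\E\bigl[X\mathbbm{1}_{\mathcal{A}}\bigr]+\E\bigl[X\mathbbm{1}_{\mathcal{A}^c}\bigr].
\end{equation*}
On $\mathcal{A}$ we simply bound $\E[X\mathbbm{1}_{\mathcal{A}}]\le N^{\epsilon}\E Y$. For the complementary term I use $|X|\le N^{c_2}$ almost surely together with $\P(\mathcal{A}^c)\le N^{-D}$, obtaining $|\E[X\mathbbm{1}_{\mathcal{A}^c}]|\le N^{c_2-D}$. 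The main (and only) subtle point is to choose $D$ large enough — specifically $D\ge c_1+c_2+\epsilon$ — so that, using $\E Y\ge N^{-c_1}$, one has $N^{c_2-D}\le N^{\epsilon-c_1}\le N^{\epsilon}\E Y$. Combining the two pieces gives $\E X\le 2N^{\epsilon}\E Y$, and since $\epsilon>0$ was arbitrary this establishes $\E X\prec \E Y$. The main obstacle is exactly this interplay between the polynomial a priori bound $N^{c_2}$ on $X$ and the polynomially small lower bound on $\E Y$; the proof works precisely because the definition of $\prec$ allows $D$ to be taken arbitrarily large, so the bad-event contribution can always be absorbed.
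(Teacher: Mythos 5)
Your proof is correct. Note that the paper does not actually supply a proof of this lemma; it is stated with a citation to Proposition~6.5 of~\cite{book}, which contains the standard argument that you have reproduced. All three parts are argued soundly from Definition~\ref{definition of stochastic domination}: parts~(1) and~(2) by union bounds with a level shift to $\epsilon/2$ where the conclusion involves a product of $N^{\epsilon/2}$ factors, and part~(3) by the split $\E X=\E[X\lone_{\mathcal A}]+\E[X\lone_{\mathcal A^c}]$, where the first term is controlled using the nonnegativity of $X$ and $Y$ (built into Definition~\ref{definition of stochastic domination}) and the second is made negligible by choosing the overwhelming-probability exponent $D$ large relative to $c_1+c_2$; the condition you wrote, $D\ge c_1+c_2+\epsilon$, is in fact slightly stronger than the minimal $D\ge c_1+c_2-\epsilon$ required, but of course still suffices. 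The only cosmetic point is that the final bound $\E X\le 2N^\epsilon\E Y$ should be turned into $\E X\le N^\epsilon\E Y$ by running the argument with $\epsilon/2$ and absorbing the factor $2$ for $N$ large, which you correctly flag by invoking the arbitrariness of $\epsilon$.
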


Let $(\lambda_\alpha)_\alpha$ be the eigenvalues of the matrix $H$ in ascending order
and let $(w_\alpha)_\alpha$ be a basis of associated normalized eigenvectors. In this paper we are interested in estimating 
\begin{align}
 w_\alpha^* H_\iota w_\beta-  \sigma_\iota\lambda_\alpha\delta_{\alpha\beta}\,,\qquad \iota=1,\ldots, k\,,
\end{align}
for any choice of $\alpha,\beta\in\llbracket 1,N\rrbracket$.

\begin{thm}\label{le main theorem}
Let $H$ be given by~\eqref{le H}, and assume $H_\iota$, $\iota=1,\ldots, k$, satisfy Assumption~\ref{assumption 1} and that $\sigma_\iota$, $\iota=1,\ldots k$, satisfy~\eqref{le sigma condition}. Then  
\begin{align}\label{le main theorem equation}
 \Big|w_\alpha^*H_\iota w_\beta-  \sigma_\iota\lambda_\alpha\delta_{\alpha\beta}\Big|\prec\frac{1}{\sqrt{N}}\,,
\end{align}
for all $\alpha,\beta\in\llbracket 1,N\rrbracket$  and $\iota\in \llbracket 1,k\rrbracket$.
\end{thm}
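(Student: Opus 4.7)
The overarching strategy is to extend the Gaussian argument of Section~\ref{sec:gauss} to the general setting by replacing $\mathcal{H}=H_1-H_2$ with the auxiliary matrix
\[
\mathcal{H}_\iota := H_\iota - \sigma_\iota H = \sum_{\iota'=1}^k (\delta_{\iota\iota'} - \sigma_\iota\sigma_{\iota'})\,H_{\iota'}.
\]
Since $w_\alpha,w_\beta$ are eigenvectors of $H$, one has the identity $w_\alpha^* H_\iota w_\beta - \sigma_\iota\lambda_\alpha\delta_{\alpha\beta} = w_\alpha^* \mathcal{H}_\iota w_\beta$, so Theorem~\ref{le main theorem} reduces to $|w_\alpha^* \mathcal{H}_\iota w_\beta| \prec N^{-1/2}$ for all $\alpha,\beta\in\llbracket 1,N\rrbracket$ and all $\iota$.

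A straightforward covariance computation, using only the normalisation $\sum_{\iota'}\sigma_{\iota'}^2 = 1$, shows that $\E[(\mathcal{H}_\iota)_{ab}\,\overline{H_{cd}}] = 0$ for all indices and that $\E|(\mathcal{H}_\iota)_{ab}|^2 = (1-\sigma_\iota^2)/N$. In the Gaussian case this uncorrelatedness becomes genuine independence of $\mathcal{H}_\iota$ and $H$, hence of the eigenvectors $\{w_\alpha\}$, and the second-moment computation of Lemma~\ref{le lemma gaussian} immediately gives the result with the effective variance $(1-\sigma_\iota^2)/(4N)$ in place of $1/(4N)$.

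For the non-Gaussian setting I would establish the high-moment bound $\E|w_\alpha^* \mathcal{H}_\iota w_\beta|^{2p} \lesssim_p N^{-p}$ for every fixed $p\in\NN$, which yields the stated $\prec N^{-1/2}$ by Markov's inequality and a union bound over $(\alpha,\beta,\iota)$. The expansion is driven by the standard cumulant formula applied to one factor $(\mathcal{H}_\iota)_{ab}=\sum_{\iota'}(\delta_{\iota\iota'}-\sigma_\iota\sigma_{\iota'})(H_{\iota'})_{ab}$: each derivative $\partial_{(H_{\iota'})_{ba}}$ acts either on another $\mathcal{H}_\iota$-entry with combinatorial weight $\sum_{\iota'}(\delta_{\iota\iota'}-\sigma_\iota\sigma_{\iota'})^2 = 1-\sigma_\iota^2$, or on the spectral data of $H$ through the chain rule (since $\partial_{(H_{\iota'})_{ba}}H = \sigma_{\iota'}\Delta^{ba}$), carrying the weight
\[
\sum_{\iota'=1}^k (\delta_{\iota\iota'}-\sigma_\iota\sigma_{\iota'})\,\sigma_{\iota'} \;=\; \sigma_\iota\Bigl(1-\sum_{\iota'}\sigma_{\iota'}^2\Bigr) \;=\; 0.
\]
The vanishing of this cross-weight is the decisive algebraic cancellation announced in the introduction: at the second-cumulant (Gaussian) level the $\mathcal{H}_\iota$-factors decouple from the spectral data of $H$, and the moment calculation reproduces the Gaussian answer $N^{-p}$.

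The main technical obstacle I anticipate is the control of the higher-cumulant corrections and the fact that the derivatives $\partial_H w_\alpha$ are singular, involving inverse eigenvalue gaps via perturbation theory. To circumvent this I would not work with individual eigenvectors directly but instead represent $|w_\alpha^* \mathcal{H}_\iota w_\beta|^{2p}$ through contour integrals of the resolvent $G(z)=(H-z)^{-1}$ on a mesoscopic spectral scale $\eta\asymp N^{-1+\epsilon}$, apply the cumulant expansion to traces of the form $\Tr[\mathcal{H}_\iota G(z_1)\cdots \mathcal{H}_\iota G(z_{2p})]$, and control the resulting resolvent matrix elements via the optimal local law (Theorem~\ref{theorem local law1}). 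The $m$-th cumulant of $(H_{\iota'})_{ab}$, of order $N^{-m/2}$, then produces contributions suppressed by $N^{-(m-2)/2}$ relative to the second-cumulant term, which closes the moment bound. Extending to the real-symmetric case and to the case $\E h_{\iota,ij}^2\neq 0$ introduces further second-cumulant covariance structures, but the critical identity $\sum_{\iota'}(\delta_{\iota\iota'}-\sigma_\iota\sigma_{\iota'})\sigma_{\iota'}=0$ remains intact in each variant and drives the same mechanism.
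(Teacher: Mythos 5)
Your auxiliary matrix $\mathcal{H}_\iota = H_\iota - \sigma_\iota H$ is (for $k=2$ and $\iota=1$) exactly $\sigma_2(\sigma_2 H_1-\sigma_1 H_2)=\sigma_2\mathcal{H}$, the paper's matrix~\eqref{le auxH} up to a constant, and your identity $\sum_{\iota'}(\delta_{\iota\iota'}-\sigma_\iota\sigma_{\iota'})\sigma_{\iota'}=0$ is precisely the paper's computational rule $\mathcal{D}_{ji}G=0$ (see~\eqref{rule 2}): the derivative of the resolvent in the $\mathcal{H}$-direction vanishes because $\mathcal{H}$ is uncorrelated with $H$. So you have correctly identified the central algebraic cancellation that drives the whole proof, and the overall strategy (reduce to a resolvent observable, expand in cumulants, feed the remaining terms into the local law) is the same.

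The one place where your sketch diverges from the paper, and where I would push back, is the final step from the mesoscopic resolvent level to the individual eigenvector quadratic form. You propose to estimate $\E|w_\alpha^*\mathcal{H}_\iota w_\beta|^{2p}$ by expanding a \emph{single long trace} $\Tr[\mathcal{H}_\iota G(z_1)\cdots\mathcal{H}_\iota G(z_{2p})]$. Beyond the bookkeeping cost of such a trace, it is not clear how you would isolate $|w_\alpha^*\mathcal{H}_\iota w_\beta|^{2p}$ from it: after spectral decomposition one gets a sum over paths $\prod_n w_{\alpha_n}^*\mathcal{H}w_{\alpha_{n+1}}$ with alternating signs, so there is no positivity to read off a single summand, and the contour argument for a fixed eigenvector pair requires control of level spacings. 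The paper avoids this entirely by taking high moments of the \emph{scalar} $\mathcal{X}(z,z')=\langle\mathcal{H}G(z)\mathcal{H}G(z')\rangle$ — a product of $2D$ short (two-$\mathcal{H}$) traces rather than one long trace — obtaining $\mathcal{X}\prec1$ \emph{uniformly} in $(z,z')\in\mathcal{E}$, and then observing that $\langle\mathcal{H}\Im G(z_1)\mathcal{H}\Im G(z_2)\rangle$ is a \emph{sum of nonnegative terms} indexed by pairs $(\alpha,\beta)$; choosing $E_1=\lambda_\alpha$, $E_2=\lambda_\beta$, $\eta_1=\eta_2=N^{-1+\epsilon}$ picks out the $(\alpha,\beta)$-term and gives $|w_\alpha^*\mathcal{H}w_\beta|^2\prec N\eta_1\eta_2$ directly, with no contour integration or level-spacing input. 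I would recommend incorporating that positivity observation — it is what makes the passage to individual eigenvectors painless here, in contrast to the QUE results cited in the introduction.
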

\nc

\begin{rem}\label{remark 1}
We formulated Theorem~\ref{le main theorem} for complex Hermitian Wigner matrices, but with some modifications our method and results carry over the real symmetric case; see Theorem~\ref{le theorem for symmetric} below. The details are given in Section~\ref{le real symmetric case}.

We further remark that one may also consider a mixed symmetry setup where some $H_\iota$'s are complex Hermitian Wigner matrices while the remaining $H_\iota$ are real symmetric Wigner matrices. The arguments in Section~\ref{le real symmetric case} can be extended to such a setting and~\eqref{le main theorem equation} indeed holds under this setup as well.
\normalcolor
\end{rem}

\section{Preliminaries}\label{sec:prelim}
In this section we collect some essential tools used in the proof of Theorem~\ref{le main theorem}. We start with the Green function of the random matrix $H$ and the corresponding local laws.

\subsection{Local law for the Green function and rigidity of eigenvalues}

For any probability measure $\mu$ on $\R$, its Stieltjes transform is defined as 
\begin{align}\label{stieltjes transform}
m_\mu(z):=\int_{\mathbb{R}} \frac{1}{x-z}\, \dd\mu(x)\,, \qquad \qquad z\in \mathbb{C}\backslash\mathbb{R}\,.
\end{align}
We denote the Stieltjes transform of the standard semicircular law by $m_{sc}(z)$.

 Let $G$ denote the Green function or resolvent of $H$, i.e.\
\begin{align}\label{green function}
 G(z):= \frac{1}{H-z}\,,\qquad z\in\C\backslash\R\,.
\end{align}
We refer to $z=E+\ii\eta$ in~\eqref{stieltjes transform} and~\eqref{green function} as spectral parameter. We denote by $m(z)$ the normalized trace of $G(z)$, i.e.,
\begin{align}\label{le m}
 m(z)=\frac{1}{N}\Tr G(z)=\langle G(z)\rangle \,,\qquad z\in\mathbb{C}\backslash\mathbb{R}\,,
\end{align}
and note that by the spectral calculus $m(z)$ is the Stieltjes transform of the empirical eigenvalue distribution of $H$. Finally, we recall the deterministic estimate $\|G(z)\|_\infty\le \|G(z)\|\le |\eta|^{-1}$ with $\eta=\im z$.

We are interested for energies $E$ in a neighborhood of the support of the semicircular law, i.e. $|E|<2+\varrho$, for some fixed $\varrho>0$. Further, fix a small $\epsilon>0$, and introduce the spectral domain
\begin{align}\label{le domain}
\mathcal{E}:=\{z=E+\ii\eta\in\C\,:\, E\in[-2-\varrho,2+\varrho],  N^{-1+\epsilon}\le|\eta|\le 1\}\,.
\end{align}

For $z,z'\in\mathcal{E}$, let $\Psi(z,z')$ denote the deterministic control parameter
\begin{align}\label{le Psi}
 \Psi(z,z'):=\frac{1}{\sqrt{N\eta_0}}\,,\qquad \eta_0=\min\{|\im z|,|\im z'|\}\,,
\end{align}
and we use the convention $\Psi(z,z)\equiv\Psi(z)$.

Let $\gamma_\alpha$ be the $\alpha$-th $ N$-quantile of the semicircle law, i.e.\ $\gamma_\alpha$ is determined by
\begin{align}\label{quantile}
 \int_{-2}^{\gamma_\alpha} \frac{1}{2\pi}\sqrt{4-x^2}\,\dd x= \frac{\alpha-1/2}{N}\,.
\end{align}
The quantile $\gamma_\alpha$ is often also referred to as the {\it classical location} of the eigenvalue $\lambda_\alpha$. 

One ingredient for our work is the following strong local law for the Green function and the eigenvalue rigidity estimate.

\begin{thm}[Theorem 2.1 in~\cite{EYY}, Theorem 2.3 in~\cite{EKYY}]\label{theorem local law1}
 Let $H$ be as in~\eqref{le H} satisfying Assumption~\ref{assumption 1}. Then we have the uniform estimates
 \begin{align}\label{le local law}
  |G_{ij}(z)-\delta_{ij}m_{sc}(z)|\prec \left({\frac{\im\, m_{sc}(z)}{N\eta}}\right)^{1/2}+\frac{1}{N|\eta|} \prec \Psi(z)\,,\qquad |m(z)-m_{sc}(z)|\prec \Psi(z)^2\,,
 \end{align}
  for all $z=E+\ii\eta\in\mathcal{E}$. 
  
Moreover we have the eigenvalue rigidity estimate
\begin{align}\label{le rigidity estimate}
|\lambda_\alpha-\gamma_\alpha|\prec\frac{1}{N^{2/3}\min\{\alpha,N-\alpha+1\}^{1/3}}\,, 
\end{align}
for all $\alpha\in\llbracket1,N\rrbracket$.
\end{thm}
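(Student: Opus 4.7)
My plan is to reduce the statement to the cited theorems of~\cite{EYY,EKYY} by verifying that the sum $H=\sum_{\iota=1}^k\sigma_\iota H_\iota$ is itself a standard complex Hermitian Wigner matrix of the class treated there. The entries
\begin{align*}
H_{ij}=\sum_{\iota=1}^k\sigma_\iota h_{\iota,ij}
\end{align*}
are independent modulo Hermitian symmetry (by the independence of the $H_\iota$ across $\iota$ and of the $h_{\iota,ij}$ across $(i,j)$), centered, and by~\eqref{le variance complex case} together with the normalization~\eqref{le sigma condition} have variance
\begin{align*}
\E|H_{ij}|^2=\sum_{\iota=1}^k\sigma_\iota^2\,\E|h_{\iota,ij}|^2=\frac{1}{N}.
\end{align*}
The moment bound~\eqref{entries moment bounds} transfers to the entries of $H$ with constants depending only on $k$ and on $C_m$. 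Consequently both the local law~\eqref{le local law} and the rigidity estimate~\eqref{le rigidity estimate} follow verbatim from Theorem~2.1 of~\cite{EYY} and Theorem~2.3 of~\cite{EKYY}, applied to $H$ in place of a generic Wigner matrix.

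For completeness I would sketch the mechanism behind the cited results. The starting point is the Schur complement identity
\begin{align*}
\frac{1}{G_{ii}(z)}=-z-H_{ii}-\sum_{j,l\neq i}H_{ij}G^{(i)}_{jl}(z)H_{li},
\end{align*}
where $G^{(i)}$ denotes the Green function of the minor obtained by deleting the $i$-th row and column of $H$. Combining standard large-deviation bounds for quadratic forms in independent random variables (exploiting the moment assumption), the a priori bound $\|G(z)\|\le|\eta|^{-1}$, and the minor-major interlacing $|\langle G\rangle-\langle G^{(i)}\rangle|=O((N\eta)^{-1})$, one obtains a perturbed self-consistent equation of the form
\begin{align*}
\frac{1}{G_{ii}(z)}+z+\langle G(z)\rangle=O_\prec(\Psi(z)).
\end{align*}
Stability of the scalar equation $m_{sc}^{-1}+z+m_{sc}=0$ at distance $\text{dist}(z,[-2,2])+\Psi(z)$ from its solution then yields the weak bound $|m(z)-m_{sc}(z)|\prec\Psi(z)$, and iteration of the resolvent identities produces the entrywise bound in~\eqref{le local law}. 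The improved averaged bound $|m-m_{sc}|\prec\Psi^2$ requires the fluctuation-averaging lemma: a high-moment expansion of the averaged error reveals an additional $\Psi$-gain because the entrywise errors coming from different rows decouple to leading order.

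The rigidity~\eqref{le rigidity estimate} would then be extracted from the averaged local law via a Helffer--Sj\"ostrand argument. I would express the difference between the eigenvalue counting function of $H$ and the semicircular counting function as an integral against $\im(m-m_{sc})$ on imaginary parts down to scale $\eta\sim N^{-1+\epsilon}$; the $\Psi^2$ bound would produce a sharp estimate on the counting function, which translates into the quantile estimate with the edge-sensitive factor $\min\{\alpha,N-\alpha+1\}^{1/3}$ using the square-root vanishing of the semicircular density near $\pm 2$.

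The hard step in this chain lies entirely within the cited papers and is the fluctuation-averaging bound producing the $\Psi^2$ improvement; it requires a delicate graphical bookkeeping of high moments of $\langle G\rangle-m_{sc}$ and is insensitive to the specific origin of $H$ as a weighted sum of independent Wigner matrices. For the present paper no additional technical work is needed beyond the trivial variance and moment verification in the first paragraph.
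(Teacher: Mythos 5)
Your reduction is correct and is exactly what the paper does implicitly: $H=\sum_\iota\sigma_\iota H_\iota$ with $\sum_\iota\sigma_\iota^2=1$ is again a Wigner matrix satisfying the assumptions of~\cite{EYY,EKYY} (centered independent entries with variance $1/N$ and finite moments of all orders), so the theorem is a direct citation. The additional sketch of the Schur-complement, fluctuation-averaging, and Helffer--Sj\"ostrand machinery is accurate but lies entirely within the cited works; the paper itself proves nothing here beyond the trivial verification you give in the first paragraph.
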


\subsection{Cumulant expansion}

A second main tool in the proof of Theorem~\ref{le main theorem} are cumulant expansions which were for example used in~\cite{KKP,LP} to study linear eigenvalue statistics of random matrices. For our purposes the following version from~\cite{heko,hero} is very suitable.

\begin{lem}[Lemma 2.4 in~\cite{hero}, Lemma 7.1 in~\cite{heko}]\label{complex_cumulant}
	Let $h$ be a complex-valued random variable with finite moments. Let $\kappa^{(p,q)}$ be the $(p,q)$ cumulant of $h$, which is defined as 
	\begin{align}\label{le complex cumulant basic}
	\kappa^{(p,q)}:=(-\ii)^{p+q} \Big( \frac{\partial^{p+q}}{\partial s^p \partial t^q} \log \E\, \mathrm{e}^{\ii s h+\ii t \overline{h}} \Big) \Bigg|_{s,t=0}.
	\end{align}
	Let $f\in C^{\infty}(\C^2;\C)$, then for any fixed $l \in \N$, we have
	
	\begin{align}\label{le simple cumulant}\E h f(h, \overline{h})=\sum_{p+q=0}^l \frac{1}{p!q!} \kappa^{(p+1,q)} f^{(p,q)}(h,\overline{h}) +\Omega_{l+1}\,,\end{align}
	where
	\begin{align*}
	 f^{(p,q)}(w_1,w_2):=\partial_{w_1}^p\partial_{w_2}^q f(w_1,w_2)\,,\qquad w_1,w_2\in\C\,,
	\end{align*}
and the error term $\Omega_{l+1}$ satisfies
	\begin{align}\label{le error in complex cumulant}
	|\Omega_{l+1}| &\le C_l \E \Big[|h|^{l+2}\Big] \max_{p+q=l+1}\sup_{|w| \leq M} |f^{(p,q)}(w,\overline{w})| \nonumber\\
	&\qquad\qquad+C_l\Big[ \E \big[ |h|^{2l+4} \lone_{ |h|>M }
	\big]  \E\big[\max_{p+q=l+1}\sup_{|w|\le |h|}|f^{(p,q)}(w,\overline{w})|^2\big]\Big]^{1/2}\,,
	\end{align}
	where $M>0$ is an arbitrary cutoff.
\end{lem}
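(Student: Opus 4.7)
The identity is essentially a quantitative moment--cumulant relation for a complex random variable, combined with a Taylor expansion of $f$. My plan is to organize the argument in two stages: first derive an exact identity at the level of complex exponentials via the cumulant generating function, and then transfer it to general smooth $f$ with a controlled remainder.

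For the first stage, I would set $\psi(s,t) := \E \mathrm{e}^{\ii s h + \ii t \overline h}$ and $F(s,t) := \log \psi(s,t)$. By the defining relation \eqref{le complex cumulant basic}, the formal Taylor expansion of $F$ at the origin is $F(s,t) = \sum_{p,q \ge 0}\frac{\kappa^{(p,q)}}{p!q!}(\ii s)^p(\ii t)^q$. Differentiating the identity $\psi = \mathrm{e}^F$ in $s$ and rearranging yields
\[
 \E\bigl[h\, \mathrm{e}^{\ii s h + \ii t \overline h}\bigr] = \psi(s,t) \sum_{p,q\ge 0} \frac{\kappa^{(p+1,q)}}{p!q!} (\ii s)^p (\ii t)^q,
\]
which, after truncation at $p+q \le l$, leaves an explicit remainder depending only on cumulants of order $\ge l+2$.

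To pass from exponentials to general smooth $f$, I would Taylor expand $f(h,\overline h)$ around the origin to order $l+1$, treating $h$ and $\overline h$ as formally independent variables. The monomials $h^{p+1}\overline h^q$ produced by this expansion have moments that, via the exponential/Fa\`a di Bruno formula connecting moments and cumulants, match exactly the polynomials appearing in the series above. After regrouping by $(p,q)$-index, the leading terms become precisely $\tfrac{\kappa^{(p+1,q)}}{p!q!} f^{(p,q)}(h, \overline h)$, and the residual Taylor error together with the truncation error from the cumulant series merge into the single remainder $\Omega_{l+1}$.

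The main obstacle is controlling $\Omega_{l+1}$ uniformly in $M$. I would split $h = h \lone_{|h|\le M} + h \lone_{|h|>M}$. The bounded piece produces, via the order $l+1$ Taylor--Lagrange remainder applied to $f$, a factor $|h|^{l+1}/(l+1)!$ multiplied by $\sup_{|w|\le M}|f^{(p,q)}(w,\overline w)|$ with $p+q = l+1$; combining with the extra $h$ from the leading $\E[h\,\cdot\,]$ yields exactly $\E[|h|^{l+2}]\max_{p+q=l+1}\sup_{|w|\le M}|f^{(p,q)}(w,\overline w)|$, the first contribution in \eqref{le error in complex cumulant}. For the tail $\{|h|>M\}$, Cauchy--Schwarz separates the probability/moment weight $\E\bigl[|h|^{2l+4}\lone_{|h|>M}\bigr]^{1/2}$ from the derivative factor $\E\bigl[\sup_{|w|\le |h|}|f^{(p,q)}(w,\overline w)|^2\bigr]^{1/2}$, producing the second contribution. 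The delicate bookkeeping is to track simultaneously the cumulant truncation, the Taylor remainder on $f$, and the cutoff at $M$, so that the final bound takes precisely the stated form with the cutoff $M$ remaining free.
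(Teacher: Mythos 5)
The paper does not prove Lemma~\ref{complex_cumulant} internally; it is cited as a combination of Lemma~7.1 in~\cite{heko} (combinatorial identity) and Lemma~2.4 in~\cite{hero} (error bound), so there is no in-paper proof to compare against. Also note that the right-hand side of~\eqref{le simple cumulant} should read $\E f^{(p,q)}(h,\overline h)$, as is clear from the deterministic bound~\eqref{le error in complex cumulant} and from how the paper uses the result in~\eqref{le second cumulant expansion1}; otherwise $\Omega_{l+1}$ would be a random variable while the stated bound is deterministic.

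Your proposal follows a genuinely different route from the cited references. The references derive the formal identity via characteristic functions: expand $-\ii\,\partial_s\log\E\,\mathrm{e}^{\ii sh+\ii t\overline h}$ in cumulants, multiply across by $\E\,\mathrm{e}^{\ii sh+\ii t\overline h}$, and integrate against the Fourier transform $\hat f(s,t)$ to convert $(\ii s)^p(\ii t)^q\E\,\mathrm{e}^{\ii sh+\ii t\overline h}$ directly into $\E f^{(p,q)}(h,\overline h)$ with no intermediate Taylor expansion. Your route (Taylor-expand $f$ around the origin and match via moment--cumulant recursions) produces the same algebraic skeleton, but your exposition has a gap that the Fourier route avoids. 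A single Taylor expansion of $f$ at the origin produces $f^{(a,b)}(0,0)$, not $\E f^{(p,q)}(h,\overline h)$; to arrive at the latter you must simultaneously Taylor-expand each $\E f^{(p,q)}(h,\overline h)$ at the origin to order $l-(p+q)$, reindex $a=p+c$, $b=q+d$, and invoke the bivariate moment--cumulant recursion $\E\,h^{a+1}\overline h^{\,b}=\sum_{p\le a,\,q\le b}\binom{a}{p}\binom{b}{q}\kappa^{(p+1,q)}\E\,h^{a-p}\overline h^{\,b-q}$ to cancel the coefficients of $f^{(a,b)}(0,0)$ with $a+b\le l$. Without spelling this out, the assertion that ``the leading terms become precisely $\frac{\kappa^{(p+1,q)}}{p!q!}f^{(p,q)}(h,\overline h)$'' is unjustified. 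You also leave implicit that the residual cumulant contributions of the form $\kappa^{(p+1,q)}\E\,|h|^{\,l-p-q+1}$, $p+q\le l$, are absorbed into the bound $\E|h|^{\,l+2}$; this requires an application of Jensen's inequality to combine $|\kappa^{(p+1,q)}|\lesssim\E|h|^{\,p+q+1}$ with $\E|h|^{\,l-p-q+1}$. The final cutoff step ($|h|\le M$ versus $|h|>M$ and Cauchy--Schwarz on the tail) is correctly structured and matches~\eqref{le error in complex cumulant}.
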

We remark Lemma~\ref{complex_cumulant} is a combination of Lemma 2.4 in~\cite{hero} and Lemma 7.1~\cite{heko}; the combinatoric part comes from~\cite{heko} and
the error estimate is taken from~\cite{hero}.

From~\eqref{le complex cumulant basic}, the first few complex cumulants are given by
\begin{align*}
 \kappa^{(1,0)}=\E h\,,\qquad \kappa^{(1,1)}=\E |h|^2-|\E h|^2\,,\qquad \kappa^{(2,0)}=\E h^2-(\E h)^2\,,  
\end{align*}
etc., with $\kappa^{(q,p)}=\overline{\kappa^{(p,q)}}$.

\section{Proof of Theorem~\ref{le main theorem}}\label{section proof of maintheorem}

  The proof of Theorem~\ref{le main theorem} is based on an essentially optimal estimate on a distinguished observable we introduce in this section; see~\eqref{le observable} below. We are going to prove Theorem~\ref{le main theorem} for the case $k=2$, the case of general $k$ then follows easily by grouping all but one summands in
   \eqref{le H} together and viewing it as a single Wigner matrix.

Generalizing~\eqref{le easy auxH}, we introduce the auxiliary matrix
\begin{align}\label{le auxH}
\mathcal{H}:=\sigma_2 H_1-\sigma_1 H_2\,,
\end{align}
whose entries are independent centered random variables, up to the symmetry constraint, with variance $\E|\mathcal{H}_{ij}|^2=\frac{1}{N}(\sigma_{1}^2+\sigma_{2}^2)=\frac1N$; see~\eqref{le sigma condition}.
 In order to prove Theorem~\ref{le main theorem}, we derive a high moment estimate for observables of the form
\begin{align}\label{le observable}
\frac{1}{N}\text{Tr} \mathcal{H} \Im G(z_1)\mathcal{H} \Im G(z_2)=\langle\caH\Im G(z_1)\caH \Im G(z_2)\rangle\,,\qquad\qquad z_1,z_2\in\mathcal{E}\,,
\end{align}
where $G$ denotes the Green function of $H$; see~\eqref{green function} and the set $\mathcal{E}$ was
defined in~\eqref{le domain}. The main technical result of this paper is the following proposition.
\begin{pro}\label{le main proposition} Under the assumptions of Theorem~\ref{le main theorem}, we have the estimate
\begin{align}\label{le claim}
 \langle\mathcal{H} \Im G(z_1)\mathcal{H} \Im G(z_2)\rangle\prec 1\,,
\end{align}
uniformly in $z_1,z_2\in\mathcal{E}$. 
\end{pro}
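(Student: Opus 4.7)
The plan is to bound all high moments $\E|X|^{2p}$ of
\[
X := \langle \mathcal{H}\, \Im G(z_1)\, \mathcal{H}\, \Im G(z_2)\rangle = \frac{1}{N}\sum_{i,j,a,b} \mathcal{H}_{ij}\, (S_1)_{ja}\, \mathcal{H}_{ab}\, (S_2)_{bi},
\]
where $S_\ell := \Im G(z_\ell)$, uniformly in $N$ for each fixed $p$; by Markov's inequality this yields $X \prec 1$. The moments $\E X^p \bar X^q$ will be computed by applying the cumulant expansion of Lemma~\ref{complex_cumulant} to a single $\mathcal{H}_{ij}$ factor. Since $\mathcal{H}_{ij} = \sigma_2 H_{1,ij} - \sigma_1 H_{2,ij}$ with $H_{1,ij}$ and $H_{2,ij}$ independent, I would split
\[
\E[\mathcal{H}_{ij}\, F] = \sigma_2\, \E[H_{1,ij} F] - \sigma_1\, \E[H_{2,ij} F]
\]
and apply the expansion to each term separately.

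The driving algebraic mechanism is a cancellation at the second-cumulant order $\kappa^{(1,1)}(H_{\iota,ij})=N^{-1}$. A derivative $\partial_{H_{\iota,ij}}$ acting on a Green function factor produces $-\sigma_\iota G\cdots G$ (because $H=\sigma_1H_1+\sigma_2H_2$), while acting on a factor $\mathcal{H}_{mn}$ it produces $c_\iota\delta_{mi}\delta_{nj}$ with $c_1=\sigma_2$, $c_2=-\sigma_1$. Summing the two contributions with prefactors $\sigma_2,-\sigma_1$ yields coefficient $\sigma_2(-\sigma_1)-\sigma_1(-\sigma_2)=0$ for any contraction of $\mathcal{H}_{ij}$ with a $G$-entry, and coefficient $\sigma_2^2+\sigma_1^2=1$ for a contraction of $\mathcal{H}_{ij}$ with another $\mathcal{H}$-entry. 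Thus at the leading Gaussian level $\mathcal{H}_{ij}$ contracts only with $\mathcal{H}$-entries, precisely as if $\mathcal{H}$ were independent of $H$---the same mechanism that made the Gaussian case of Section~\ref{sec:gauss} elementary.

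Applied to $\E X$, the surviving Gaussian contraction pairs $\mathcal{H}_{ij}$ with $\mathcal{H}_{ab}$ inside the trace via $\E\mathcal{H}_{ij}\mathcal{H}_{ab}=N^{-1}\delta_{ib}\delta_{ja}$, giving
\[
\E X \approx \frac{1}{N^2}\sum_{i,j}(S_1)_{jj}(S_2)_{ii} = \langle \Im G(z_1)\rangle\langle \Im G(z_2)\rangle \approx \Im m_{sc}(z_1)\Im m_{sc}(z_2) = O(1),
\]
by the local law~\eqref{le local law}. For $\E X^p \bar X^q$, I single out one $\mathcal{H}_{ij}$ and apply the expansion; the surviving terms are (i) self-pairings inside the same trace, reproducing $\langle S_1\rangle\langle S_2\rangle\cdot\E X^{p-1}\bar X^{q}$; (ii) cross-pairings with $\mathcal{H}$-entries in a different trace factor, which reduce the total number of $\mathcal{H}$'s by two and produce off-diagonal $G$-products controlled by $|G_{ij}-\delta_{ij}m_{sc}|\prec\Psi$; and (iii) higher-cumulant remainders of order $\kappa^{(r+1)}(\mathcal{H}_{ij})\cdot\partial^r F = O(N^{-(r-1)/2})$ times sums of $G$-entries, which are negligible by~\eqref{le local law} for every $r\ge 2$. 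Iterating this procedure removes two $\mathcal{H}$'s per step, closing a recursion of the form $\E|X|^{2p} \le C_p + (\text{strictly lower-order in }p)$, hence $\E|X|^{2p}=O(1)$.

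The main obstacle will be the combinatorial bookkeeping of all pairings and higher-cumulant remainders and, above all, carrying out the index summations with the correct dimension count. The subtlety is that $\|S_\ell\|\sim\eta_\ell^{-1}$ can be as large as $N^{1-\epsilon}$, so naive operator-norm bounds would be catastrophic; one must exploit that the diagonal and trace-normalized quantities $S_{\ell,ii}$, $\langle S_\ell\rangle$ stay $O(1)$ while off-diagonal entries are only of order $\Psi(z_\ell)$, and pair each free index with either a diagonal entry or with a factor of $\Psi$ that offsets the size of the summation. Orchestrating this cancellation-and-counting scheme is the technical heart of the argument.
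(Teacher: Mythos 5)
You have correctly identified the central algebraic mechanism, namely that the combined derivation $\caD_{ji}=\sigma_2\partial_{1,ji}-\sigma_1\partial_{2,ji}$ annihilates every Green-function entry while producing exactly $\delta_{ja}\delta_{ib}$ when it hits another $\caH$-factor; this is precisely~\eqref{rule 1}--\eqref{rule 2bis} in the paper, and the high-moment cumulant-expansion strategy is also the one the proof uses. Two points, one of substance and one of emphasis, separate your plan from the actual argument.

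First, the description of the leading self-pairing as ``reproducing $\langle S_1\rangle\langle S_2\rangle\cdot\E X^{p-1}\bar X^{q}$'' and closing a recursion in $p$ is not quite right as stated: $\langle S_1\rangle\langle S_2\rangle$ is a \emph{random} quantity correlated with the remaining $X$-factors, so it cannot be removed from the expectation, and the purported recursion $\E|X|^{2p}\lesssim \E|X|^{2p-2}+\dots$ does not literally hold. The paper sidesteps this by defining the random centering $\varkappa(z,z')=m(z)m(z')=\tfrac1{N^2}\sum_{ij}G_{ii}G'_{jj}$ (\eqref{le kappa rule}) and estimating $\E\bigl|(\caX-\varkappa)\bigr|^{2D}$ instead of $\E|\caX|^{2D}$: when the derivation $\caD_{j_1i_1}^{(1,0)}$ hits the twin $\caH_{a_1b_1}$ inside the same $\caX$-factor it produces exactly $\varkappa P(D-1,D)$, which cancels the $-\varkappa P(D-1,D)$ coming from the expansion of $(\caX-\varkappa)P(D-1,D)$. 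After that subtraction \emph{every} surviving term has had its count of free summation indices or its cumulant order reduced. Your approach, minus the subtraction, can still be made to work by fully expanding all $\caH$'s and estimating the resulting $\caH$-free polynomials, but then the accumulating $\langle S_\ell\rangle$-factors have to be treated alongside the Green-function monomials; the centering makes the bookkeeping substantially cleaner.

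Second, the power counting you envisage (pair every free index sum against a diagonal entry or a $\Psi$-factor) is a \emph{finer} estimate than the paper actually needs for Proposition~\ref{le main proposition}. The paper simply bounds every entry by $|G_{ij}|\prec 1$ and tracks three integers in each fully expanded term: the number $M$ of index collapses from $\caD^{(p,q)}$ hitting a non-twin $\caH$, the number $L_1$ of first-order cumulant expansions, and the number $L_2$ of expansions starting at order $\ge 2$. The contribution scales as $(1/\sqrt N)^{L_1-M}$, and the observation $L_1\le M$ --- every first-order expansion must collide with an $\caH$, since $\caD^{(1,0)}_{ji}G=0$ --- closes the estimate. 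Your $\Psi$-tracking is exactly the kind of refinement discussed in Remark~\ref{le remark about variance symmeric case}, so your instinct is sound, but it is overkill for $\prec 1$. Finally, to make the estimate uniform in $(z_1,z_2)\in\mathcal E$ you still need the standard Lipschitz-plus-lattice union bound, which the proposal does not mention but is routine.
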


\begin{rem}
 Using the deterministic bound $\|G(z)\|\le\frac{1}{|\eta|}$ and the bounds $\|H_1\|$, $\| H_2\|\prec 1$, which follow from~\eqref{le rigidity estimate}, we get the a priori bound
 \begin{align}
\langle \mathcal{H} \Im G(z_1)\mathcal{H} \Im G(z_2)\rangle\prec\frac{1}{|\im z_1|\,|\im z_2|}\prec N^2\,,
 \end{align}
 on the spectral domain $\mathcal{E}$. Thus~\eqref{le claim} is an improvement of two orders in $N$ and gives the correct size, up to factors of $N^\epsilon$.
\end{rem}

The proof of Proposition~\ref{le main proposition} is postponed to Section~\ref{section proof of proposition} and we next show how it implies Theorem~\ref{le main theorem}.

\begin{proof}[Proof of Theorem~\ref{le main theorem}]
 In order to link~\eqref{le claim} to~\eqref{le main theorem equation} we observe that  by spectral decomposition we have
\begin{align}\label{le spectral decomposition}
 \frac{1}{N}\text{Tr} \mathcal{H} \Im G(z_1)\mathcal{H} \Im G(z_2)=\frac{1}{N}\sum_{\alpha,\beta=1}^N| w_\alpha^*\mathcal{H} w_\beta|^2\frac{\eta_1}{(\lambda_\alpha-E_1)^2+\eta_1^2}\frac{\eta_2}{(\lambda_\beta-E_2)^2+\eta_2^2}\,,
\end{align}
where $z_1=E_1+\ii\eta_1$, $z_2=E_2+\ii\eta_2$, $\eta_1\not=0$, $\eta_2\not=0$.

 Fix now indices $\alpha,\beta$ and choose $E_1=\lambda_\alpha$ and $E_2=\lambda_\beta$, as well as $\eta_1=\eta_2=N^{-1+\epsilon}$ such that~$z_1,z_2\in\mathcal{E}$ with very high probability by~\eqref{le rigidity estimate}. Then we obtain from the uniform bound in~\eqref{le claim} combined with the representation~\eqref{le spectral decomposition} the estimate
\begin{align}\label{le matrix element}
 | w_\alpha^*\mathcal{H} w_\beta|^2\prec N\eta_1\eta_2\prec N^{-1}\,,
\end{align}
for all $\alpha,\beta\in\llbracket 1,N\rrbracket$.

Next, similarly to~\eqref{le baby computation}, we conclude by noticing that
\begin{align*}
  w_\alpha^*H_1 w_\beta-\sigma_1\lambda_\alpha\delta_{\alpha\beta}&= w_\alpha^*H_1 w_\beta-\sigma_1^2 w_\alpha^* H_1w_\beta-\sigma_1\sigma_2w_\alpha^*H_2w_\beta\nonumber\\
 &=\sigma_2^2 w_\alpha^*H_1 w_\beta-\sigma_1\sigma_2w_\alpha^*H_2w_\beta\nonumber\\
 &=\sigma_2 w_\alpha^*\mathcal{H}w_\beta\nonumber\\
 &=O_\prec\Big(\frac{1}{\sqrt{N}}\Big)\,,
\end{align*}
where we used~\eqref{le matrix element}. This concludes the proof of Theorem~\ref{le main theorem}.
\end{proof}

 \section{Computation of the expectation}\label{section expectation}

\newcommand{\caD}{\mathcal{D}}
\newcommand{\caX}{\mathcal{X}} 
In this section we compute the expectation of the observable $\langle\mathcal{H} \im G(z_1)\mathcal{H}  \im G(z_2)\rangle$. Since this random variable is, for $z_1,z_2\in\C^+$, positive, the expectation already indicates its correct size. Also the estimation of the expectation unveils the cancellation mechanism Theorem~\ref{le main theorem} eventually results from.

\begin{lem}\label{le lemma expectation final}
 Let $z_1,z_2\in\mathcal{E}$. Then,
 \begin{align}
  \E\,\langle \mathcal{H}\im G(z_1)\mathcal{H} \im G(z_2)\rangle=\im m_{sc}(z_1)\im m_{sc}(z_2)+O_\prec\Big(\frac{1}{\sqrt{N}}\Big)+O_\prec\Big(\Psi^2(z_1,z_2)\Big)\,,
 \end{align}
where $\Psi(z_1,z_2)$ is defined in~\eqref{le Psi}.
\end{lem}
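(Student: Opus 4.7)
The plan is to apply the cumulant expansion of Lemma~\ref{complex_cumulant} to one of the two factors of $\caH$ in the observable, expanding with respect to the underlying independent Wigner entries $h_{1,ij}$ and $h_{2,ij}$, and to show that only the ``Wick-contraction'' contribution at second-cumulant order survives at leading order. First I would write $\im G(z)=(G(z)-G(\bar z))/(2\ii)$ so that it suffices to control $\E\langle\caH G(z'_1)\caH G(z'_2)\rangle$ for the four choices $(z'_1,z'_2)\in\{z_1,\bar z_1\}\times\{z_2,\bar z_2\}$, and expand the trace in matrix entries as
\begin{equation*}
\E\,\langle\caH G(z'_1)\caH G(z'_2)\rangle=\frac{1}{N}\sum_{i,j,a,b}\E\big[\caH_{ij}\,G(z'_1)_{ja}\,\caH_{ab}\,G(z'_2)_{bi}\big].
\end{equation*}
I would then apply Lemma~\ref{complex_cumulant} to the leading factor $\caH_{ij}=\sigma_2 h_{1,ij}-\sigma_1 h_{2,ij}$ separately for each of the two genuinely independent summands.

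At second-cumulant order the resulting terms split according to where the $\partial_{\bar h_{\iota,ij}}$ derivative lands. Using $\partial H_{ab}/\partial\bar h_{\iota,ij}=\sigma_\iota\delta_{aj}\delta_{bi}$ one obtains $\partial G_{ab}/\partial\bar h_{\iota,ij}=-\sigma_\iota G_{aj}G_{ib}$, together with $\partial\caH_{ab}/\partial\bar h_{1,ij}=\sigma_2\delta_{aj}\delta_{bi}$ and $\partial\caH_{ab}/\partial\bar h_{2,ij}=-\sigma_1\delta_{aj}\delta_{bi}$. When the derivative lands on the other factor $\caH_{ab}$, the $h_1$- and $h_2$-contributions add with weights $\sigma_2^2$ and $\sigma_1^2$, so by~\eqref{le sigma condition} the sum collapses to
\begin{equation*}
\frac{1}{N^2}\sum_{i,j}G(z'_1)_{jj}G(z'_2)_{ii}=\langle G(z'_1)\rangle\langle G(z'_2)\rangle,
\end{equation*}
which by the local law~\eqref{le local law} equals $m_{sc}(z'_1)m_{sc}(z'_2)+O_\prec(\Psi(z_1,z_2)^2)$. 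Reassembling the four choices of $(z'_1,z'_2)$ then yields the advertised leading term $\im m_{sc}(z_1)\im m_{sc}(z_2)$.

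The crucial algebraic cancellation occurs when the derivative lands on a Green's function entry. The $h_{1,ij}$-contribution carries the coefficient $\sigma_2\cdot(-\sigma_1)$ while the $h_{2,ij}$-contribution carries $(-\sigma_1)\cdot(-\sigma_2)$, and these cancel identically. This is the mechanism advertised in the introduction and it is what distinguishes $\caH$ from $H$: for $H$ the same computation would produce the usual self-energy term $m_{sc}^2$ at precisely this stage. The same $\sigma_1\sigma_2$-antisymmetry also kills the analogous $G$-derivative contribution coming from the $(2,0)$-cumulant part of the expansion, irrespective of whether $\E h_{\iota,ij}^2$ vanishes.

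It remains to estimate the contributions from cumulants $\kappa^{(p+1,q)}$ with $p+q\ge 2$. Each such cumulant carries a factor of size $O(N^{-(p+q+1)/2})$; every $G$-derivative replaces a factor $G_{xy}$ by a product $G_{\cdot\cdot}G_{\cdot\cdot}$ and enforces two index identifications; and every remaining factor $\caH_{ab}$ can be controlled by $|\caH_{ab}|\prec N^{-1/2}$. A standard loop count combined with the off-diagonal decay $|G_{ij}|\prec\Psi$ and $|G_{ii}|\prec 1$ from~\eqref{le local law} then bounds each such term by $O_\prec(N^{-1/2})$, uniformly in $z_1,z_2\in\mathcal{E}$. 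The truncation error $\Omega_{l+1}$ in Lemma~\ref{complex_cumulant} is handled by choosing $l$ large and invoking the moment assumption~\eqref{entries moment bounds} together with the deterministic bound $\|G\|\le|\eta|^{-1}$. The main obstacle I anticipate is a careful bookkeeping organisation so that both the $(1,1)$- and $(2,0)$-second-cumulant cancellations on $G$-derivative terms are transparent, together with a separate treatment of the diagonal ($i=j$) contributions, whose real-entry structure differs from the off-diagonal complex case but contributes only at subleading order.
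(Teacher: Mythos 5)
Your identification of the main-term mechanism is correct and matches the paper exactly: the first-order part of the combined derivative $\caD_{ji}^{(1,0)}=\sigma_2\kappa_{1,ji}^{(1,1)}\partial_{1,ji}-\sigma_1\kappa_{2,ji}^{(1,1)}\partial_{2,ji}$ annihilates the Green function (the $\sigma_1\sigma_2$ antisymmetry you describe, precisely rule~\eqref{rule 2}), while its action on the other factor $\caH_{ab}$ contracts the indices and produces $m(z)m(z')$. So far so good.

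The gap is in your handling of the higher-order cumulants $p+q\ge 2$. You propose to control the ``remaining factor'' $\caH_{ab}$ simply by $|\caH_{ab}|\prec N^{-1/2}$ and finish by a ``standard loop count''. That power count does not close: for $p+q=2$, when no derivative hits $\caH_{ab}$, one obtains six Green function entries of which generically only two are off-diagonal, so the summand is $\prec N^{-1/2}\Psi^2$; together with the prefactor $N^{-1}\cdot N^{-3/2}$ and the $N^4$ free indices $i,j,a,b$ this gives $O_\prec(N\Psi^2)=O_\prec(1/\eta_0)$, which on the regime $\eta_0\sim N^{-1+\epsilon}$ is as large as $N^{1-\epsilon}$ and is therefore nowhere near $O_\prec(N^{-1/2})+O_\prec(\Psi^2)$. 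The issue is that taking the absolute value of $\caH_{ab}$ discards the cancellation coming from $\E\caH_{ab}=0$ \emph{before} you have a chance to exploit it. (The ``standard loop count'' you invoke is designed for traces of products where \emph{all} matrix factors are cumulant-expanded; here one of the $\caH$'s would be left unexpanded.)

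What the paper does, and what is genuinely needed, is a \emph{second} cumulant expansion in the leftover factor $\caH_{ab}$. Crucially, the algebraic cancellation $\caD^{(1,0)}G=0$ is exploited a second time: the $p_2+q_2=1$ contribution of this second expansion also vanishes, so after both expansions every surviving term carries at least $p_1+q_1\ge 2$ and $p_2+q_2\ge 2$, hence a prefactor $N^{-1}\cdot N^{-3/2}\cdot N^{-3/2}=N^{-4}$ that exactly compensates the $N^4$ free indices. The dominant term ($p_1+q_1=p_2+q_2=2$) is then bounded by $\Psi^2$ via the off-diagonal counting, and everything with $p_1+q_1+p_2+q_2\ge 5$ is $O_\prec(N^{-1/2})$. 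Without this second expansion the claimed error bound does not follow, so the lemma is not established by your argument as written. (A similar remark applies to the case where a derivative hits the wrong Green function in your discussion; you note correctly that the leftover $\caH_{ab}$ can be bounded trivially once derivatives have identified $\{a,b\}$ with $\{i,j\}$ and removed the volume factor, but that is not the dangerous case.)
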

\begin{proof}
We start by noticing that it suffices to estimate 
\begin{align}\label{le X}
 \mathcal{X}(z,z'):=\langle \caH G(z)\caH G(z')\rangle\,.
\end{align}
for $z=z_1, \overline{z}_1$ and $z'=z_2, \overline{z}_2$. Further introduce the short hand notation
 \begin{align}\label{le G and F}
  G\equiv G(z)\,,\qquad G'\equiv G(z')\,.
 \end{align}
Moreover, note that we can write
 \begin{align}\label{le for sum convetion}
 \mathcal{X}(z,z')=\frac{1}{N}\sum_{ijab}\caH_{ij}\Tr (\Delta^{ij}G\Delta^{ab}G')\caH_{ab}\,,
 \end{align}
where the matrix $(\Delta^{ij})$ is defined to have entries $(\Delta^{ij})_{nm}=\delta_{in}\delta_{jm}$, or using rank one operators 
\begin{align}\label{le Delta}\Delta^{ij}:=|e_i\rangle \langle e_j|\,,\end{align}
where $(e_i)_i$ is the canonical basis in $\C^N$. Recall that $\sum_{ijab}$ indicates
a sum over all indices from $1$ to $N$.

 Our task is to compute
\begin{align}\label{le mesoscopic written out}
 \E\mathcal{X}(z,z')&=\frac{1}{N}\sum_{ijab} \E \big[\caH_{ij}\Tr(\Delta^{ij}G\Delta^{ab}G')\caH_{ab}\big]\,.
\end{align}
For this we use the cumulant expansions of Lemma~\ref{complex_cumulant}. To get started, we need  more notation. Let $\kappa_{\iota,ij}^{(p,q)}=\overline{\kappa_{\iota,ji}^{(p,q)}}$ denote the cumulants of the matrix entries $h_{\iota,ij}$, $\iota=1,2$. We will for simplicity assume for the moment that $\E h_{\iota,ij}^2=0$, this condition can easily be relaxed; see Section~\ref{remark about variance complex case}. Together with Assumption~\ref{assumption 1} this implies
\begin{align}\label{the first few kappas}
 \kappa_{\iota,{ij}}^{(1,0)}&=\kappa_{\iota,{ij}}^{(0,1)}=0\,,\qquad\kappa_{\iota,{ij}}^{(1,1)}=\frac{1}{N}\,,\qquad \kappa_{\iota,{ij}}^{(2,0)}=\kappa_{\iota,{ij}}^{(0,2)}=0\,.
\end{align}
Further, from~\eqref{entries moment bounds} in Assumption~\ref{assumption 1} we have the estimates
\begin{align}\label{le size of kappa}
 |\kappa_{\iota,ij}^{(p,q)}|\le\frac{C_{p+q}}{N^{\frac{p+q}{2}}}\,,\qquad p+q\ge 3\,.
\end{align}

Next, introduce the derivation operator
\begin{align}\label{le D}
 \caD_{ji}:=(\sigma_{2}\partial_{1,ji}-\sigma_{1}\partial_{2,ji})\,,
\end{align}
where $\partial_{\iota,ji}\equiv \frac{\partial}{\partial h_{\iota,ji}}$, $\iota=1,2$.  

We now have the computational rules,

\begin{align}\label{rule 1}
 \caD_{ji}\caH_{ab}=\delta_{ja}\delta_{ib}\,,
\end{align}
where we used~\eqref{le sigma condition}, and
\begin{align}\label{rule 2}
 \caD_{ji}G(z)=-\sigma_{2}\sigma_{1}G(z)\Delta^{ji} G(z)+\sigma_{1}\sigma_{2}G(z)\Delta^{ji} G(z)=0\,,
\end{align}
where we used the basic differential rule
\begin{align}\label{le basic rule derivative}
 \partial_{\iota,ji}G(z)=-G(z)\sigma_{\iota}\Delta^{ji}G(z)\,,\qquad\quad \iota=1,2\,.
\end{align}

We will also require a higher order analogue of $\mathcal{D}$: For $p,q\in\N$ define
\begin{align}\label{le fancy D}
 \caD_{ji}^{(p,q)}:= \frac{1}{p!q!}N^{\frac{p+q+1}{2}}\Big(\sigma_{2}\kappa_{1,ji}^{(p,q+1)}\partial_{1,ji}^p\partial_{1,ij}^q-\sigma_{1}\kappa_{2,ji}^{(p,q+1)}\partial_{2,ji}^p\partial_{2,ij}^q\Big)\,,
 \end{align}
with this notation we have $\caD_{ji}=\caD_{ji}^{(1,0)}$ and record that 
\begin{align}\label{rule 2bis}
\caD_{ji}^{(1,0)}G=0\,,\qquad\caD_{ji}^{(0,1)}=0\,,
\end{align}
where the first relation follows from~\eqref{rule 2}, while the second follows from $\kappa^{(0,2)}=0$; see~\eqref{the first few kappas}. 
With the notation in~\eqref{le fancy D} we next recall Lemma~\ref{complex_cumulant} to obtain the following cumulant expansion lemma.

\begin{lem}\label{lemma cumulant expansion}
 Fix indices $i,j$ and integers $d,d'$. Let $F$ be a monomial in the Green function entries $(G_{nm}(z))_{nm}$, $(G_{nm}(z'))_{nm}$, and matrix entries $(\mathcal{H}_{nm})_{nm}$ of total degree $d$ in the Green function entries and total degree $d'$ in $\mathcal{H}_{nm}$ where $d'\le d$. Then for any fixed $l\in\N$, 
 \begin{align}\label{le second cumulant expansion1}
  \E_{ij} \caH_{ij} F=\sum_{p+q=1}^l\frac{1}{N^{\frac{p+q+1}{2}}} \E_{ij}\mathcal{D}_{ji}^{(p,q)}F+\Omega_{l+1}(F)\,,
 \end{align}
where $\E_{ij}$ denotes the expectation with respect to the random variables $h_{1,ij}$ and $h_{2,ij}$. The error term satisfies the bound
\begin{align}
|\Omega_{l+1}(F)|\prec N^{-(l+2)/2}\,,
\end{align}
where the explicit constants depend on $d$ and $d'$, but are uniform in the matrix indices.

\end{lem}
The proof of Lemma~\ref{lemma cumulant expansion} is postponed to Appendix~\ref{le appendix}. Lemma~\ref{lemma cumulant expansion} has the following direct corollary whose proof is postponed to Appendix~\ref{le appendix}, too.
 \begin{cor}\label{cor cumulant expansion}
 Fix indices $i,j$. Let $F$ be a monomial in the Green function entries $(G_{nm}(z))_{nm}$, $(G_{nm}(z'))_{nm}$, and matrix entries $(\mathcal{H}_{nm})_{nm}$ of total degree $d$ in Green function entries and total degree $d'$ in $\mathcal{H}_{nm}$ where $d'\le d$. Then for any fixed $l\in\N$, 
 \begin{align}\label{le second cumulant expansion}
  \E \caH_{ij} F=\sum_{p+q=1}^l\frac{1}{N^{\frac{p+q+1}{2}}} \E \mathcal{D}_{ji}^{(p,q)}F+\E\Omega_{l+1}(F)\,,
 \end{align}
where the error term satisfies the bound
\begin{align}
|\E \Omega_{l+1}(F)|\prec N^{-(l+2)/2}\,,
\end{align}
where the explicit constants depend on $d$ and $d'$, but are uniform in the matrix indices.

\end{cor}

With Corollary~\ref{cor cumulant expansion} and the computational rules~\eqref{rule 1} and~\eqref{rule 2bis} at hand, we begin to compute the expectation of $\caX(z,z')$:
\begin{align}\label{le expansion of expectation}
 \E\caX(z,z')&=\frac{1}{N}\sum_{ijab} \E \caH_{ij}\Tr(\Delta^{ij}G\Delta^{ab}G')\caH_{ab}\nonumber\\
 &=\frac{1}{N}\sum_{p+q=1}^l\frac{1}{N^{\frac{p+q+1}{2}}} \sum_{ijab}\E\mathcal{D}_{ji}^{(p,q)}\Big[\Tr(\Delta^{ij}G\Delta^{ab}G')\caH_{ab}\Big]+O_\prec(N^{(-l+4)/2})\nonumber\\
 &=\frac{1}{N^2} \sum_{ijab}\E\mathcal{D}_{ji}^{(1,0)}\Big[\Tr(\Delta^{ij}G\Delta^{ab}G')\caH_{ab}\Big]\nonumber\\ &\qquad+\frac{1}{N}\sum_{p+q=2}^l\frac{1}{N^{\frac{p+q+1}{2}}} \sum_{ijab}\E\mathcal{D}_{ji}^{(p,q)}\Big[\Tr(\Delta^{ij}G\Delta^{ab}G')\caH_{ab}\Big]+O_\prec(N^{(-l+4)/2})\,,
\end{align}
where we used Corollary~\ref{cor cumulant expansion} together with~\eqref{le size of kappa} and power counting to estimate the error term from cutting the cumulant expansion at order $l$ to be $N^{-1}N^{4}O_\prec(N^{-(l+2)/2})=O_\prec(N^{(-l+4)/2})$.

We first focus on the first term on the right side of~\eqref{le expansion of expectation}. Using~\eqref{rule 1} and~\eqref{rule 2}, we get
\begin{align}\label{le needed in real case too}
 \frac{1}{N^2} \sum_{ijab}\E\mathcal{D}_{ji}^{(1,0)}\Big[\Tr(\Delta^{ij}G\Delta^{ab}G')\caH_{ab}\Big]&
 =\frac{1}{N^2}\sum_{ijab}\E\Tr (\Delta^{ij}G\Delta^{ab}G')\delta_{ja}\delta_{ib}\nonumber\\
 &\qquad+\frac{1}{N^2}\sum_{ijab}\E\Tr (\Delta^{ij}(\caD_{ji}G)\Delta^{ab}G')\caH_{ab}\nonumber\\ &\qquad+\frac{1}{N^2}\sum_{ijab}\E\Tr (\Delta^{ij}G\Delta^{ab}(\caD_{ji}G'))\caH_{ab}\nonumber\\
 &=\frac{1}{N^2}\sum_{ij}\E G_{jj}G_{ii}'=\E m(z)m(z')\,,
\end{align}
Note that the only non-zero term is when $\caD_{ji}$ acts on $\caH_{ab}$. By the local law in~\eqref{le local law}, and the deterministic estimate $|m(z)|\le\frac{1}{|\eta|}\le N$ together with item $(3)$ of Lemma~\ref{dominant}, the first term on the right side of~\eqref{le expansion of expectation} is thus given by
\begin{align}\label{le contribution 1}
  \frac{1}{N^2} \sum_{ijab}\E\mathcal{D}_{ji}^{(1,0)}\Big[\Tr(\Delta^{ij}G\Delta^{ab}G')\caH_{ab}\Big]&=m_{sc}(z)m_{sc}(z')+O_\prec\big(\Psi(z,z')^2)\,.
\end{align}

Consider next the second term on the right of~\eqref{le expansion of expectation}. We are going to use yet another cumulant expansion with respect to $\caH_{ab}$ to exploit further cancellation based on~\eqref{rule 2bis}. For this purpose we first note that if $\{a,b\}\not=\{i,j\}$ as sets, then
\begin{align}\label{le switch derivata}
 \mathcal{D}_{ji}^{(p,q)}\big(\Tr(\Delta^{ij}G\Delta^{ab}G')\caH_{ab}\big)=\caH_{ab}\mathcal{D}_{ji}^{(p,q)}\big(\Tr(\Delta^{ij}G\Delta^{ab}G')\big)\,,
\end{align}
because then $\partial_{1,ij}\caH_{ab}=\partial_{2,ij}\caH_{ab}=0$. If $\{a,b\}=\{i,j\}$, then by power counting using~$|\caH_{ab}|\prec1$ and the boundedness of the Green function entries, we can estimate
 \begin{align*}
 &\Big|\frac{1}{N}\sum_{p+q=2}^l\frac{1}{N^{\frac{p+q+1}{2}}} \sum_{ijab} \lone_{(\{a,b\}=\{i,j\})}\E\mathcal{D}_{ji}^{(p,q)}\Big[\Tr(\Delta^{ij}G\Delta^{ab}G')\caH_{ab}\Big]\Big|\prec\frac{1}{N}\frac{1}{N^{3/2}}N^2\prec\frac{1}{\sqrt{N}}\,,
\end{align*}
where we tacitly used item $(3)$ of Lemma~\ref{dominant}, together with H\"older's inequality and the deterministic estimate $\|G(z)\|\le|\eta|^{-1}\le N$ and the moment bounds in~\eqref{entries moment bounds}.
Hence, we have for the second term on the right side of~\eqref{le expansion of expectation} that
\begin{align}\label{le contribution 1bis}
 &\frac{1}{N}\sum_{p+q=2}^l\frac{1}{N^{\frac{p+q+1}{2}}} \sum_{ijab}\E\mathcal{D}_{ji}^{(p,q)}\Tr(\Delta^{ij}G\Delta^{ab}G')\caH_{ab}
 \nonumber\\&\qquad\qquad=\frac{1}{N}\sum_{p+q=2}^l\frac{1}{N^{\frac{p+q+1}{2}}} \sum_{\{i,j\}\not=\{a,b\}}\E\caH_{ab}\mathcal{D}_{ji}^{(p,q)}\Tr(\Delta^{ij}G\Delta^{ab}G')+O_\prec\big(\frac{1}{\sqrt{N}}\big)\,.
\end{align}

Next, using a cumulant expansion to order $l$ with respect to $\mathcal{H}_{ab}$, we get
\begin{align*}
 &\frac{1}{N}\sum_{p+q=2}^l\frac{1}{N^{\frac{p+q+1}{2}}} \sum_{\{i,j\}\not=\{a,b\}}\E\mathcal{D}_{ji}^{(p_1,q_1)}\Tr(\Delta^{ij}G\Delta^{ab}G')\caH_{ab}\nonumber\\ 
 &\qquad= \frac{1}{N}\sum_{\substack{p_1+q_1=2 \\ p_2+q_2=1}}^l\frac{1}{N^{\frac{p_1+q_1+p_2+q_2+2}{2}}} \sum_{\{i,j\}\not=\{a,b\}}\E\mathcal{D}_{ba}^{(p_2,q_2)}\mathcal{D}_{ji}^{(p_1,q_1)}\Tr(\Delta^{ij}G\Delta^{ab}G')+O_\prec(N^{(-l+1)/2})\,.
\end{align*}
By~\eqref{rule 2bis}, we see that the terms with $p_2+q_2=1$ yield a zero contribution, so we have
\begin{align}\label{le one intermediate step}
 &\frac{1}{N}\sum_{p+q=2}^l\frac{1}{N^{\frac{p+q+1}{2}}} \sum_{\{i,j\}\not=\{a,b\}}\E\mathcal{D}_{ji}^{(p,q)}\Tr(\Delta^{ij}G\Delta^{ab}G')\caH_{ab}\nonumber\\ 
 &\qquad= \frac{1}{N}\sum_{\substack{p_1+q_1=2\\ p_2+q_2=2}}^l\frac{1}{N^{\frac{p_1+q_1+p_2+q_2+2}{2}}} \sum_{\{i,j\}\not=\{a,b\}}\E\mathcal{D}_{ba}^{(p_2,q_2)}\mathcal{D}_{ji}^{(p_1,q_1)}\Tr(\Delta^{ij}G\Delta^{ab}G')+O_\prec(N^{(-l+1)/2})\nonumber\\
 &\qquad= \frac{1}{N^4}\sum_{\{i,j\}\not=\{a,b\}}\E\mathcal{D}_{ba}^{(1,1)}\mathcal{D}_{ji}^{(1,1)}\Tr(\Delta^{ij}G\Delta^{ab}G')\nonumber\\
 &\qquad\qquad+\frac{1}{N}\sum_{\substack{p_1+q_1+p_2+q_2\ge5}}^l\frac{1}{N^{\frac{p_1+q_1+p_2+q_2+2}{2}}} \sum_{\{i,j\}\not=\{a,b\}}\E\mathcal{D}_{ba}^{(p_2,q_2)}\mathcal{D}_{ji}^{(p_1,q_1)}\Tr(\Delta^{ij}G\Delta^{ab}G')\nonumber\\ &\qquad\qquad+O_\prec(N^{(-l+1)/2})\,.
\end{align}
Using the local law for the Green function entries in~\eqref{le local law} and Lemma~\ref{dominant}, we can easily bound the second term on the right side by
\begin{align}\label{le contribution 3}
 \Big|\frac{1}{N}\sum_{\substack{p_1+q_1+p_2+q_2\ge5}}^l\frac{1}{N^{\frac{p_1+q_1+p_2+q_2+2}{2}}} \sum_{\{i,j\}\not=\{a,b\}}\E\mathcal{D}_{ba}^{(p_2,q_2)}\mathcal{D}_{ji}^{(p_1,q_1)}\Tr(\Delta^{ij}G\Delta^{ab}G')\Big|\prec \frac{1}{\sqrt{N}}\,.
\end{align}

For the first term on the right side of~\eqref{le one intermediate step}, we observe that $\mathcal{D}_{ba}^{(1,1)}\mathcal{D}_{ji}^{(1,1)}$ contains four partial derivatives. When those act on the Green function entries $\Tr(\Delta^{ij}G\Delta^{ab}G')= G_{ja}G'_{bi}$ they create  by~\eqref{le basic rule derivative} monomials of degree six in the Green function entries. Assuming that $a,b,i,j$ are all distinct, the four partial derivatives will create diagonal as well as off-diagonal Green function entries when acting on $G_{ja}G'_{bi}$ since, e.g.\ $\partial_{1,ba}G_{ja}=-\sigma_{1}G_{jb}G_{aa}$. Note that the total number of  off-diagonal entries does not decrease, hence each resulting monomial contains at least two off-diagonal entries. In power counting we count diagonal entries as $O_\prec(1)$ while the off-diagonal are counted as $O_\prec(\Psi)$. If there are coincidences among the indices, we gain a factor $1/N$ in the summation for each coincidence, hence those are negligible when compared with $\Psi^2$. We hence have the estimate

\begin{align}\label{le contribution 2}
 \Big|\frac{1}{N^4}\sum_{\{i,j\}\not=\{a,b\}}\E\mathcal{D}_{ba}^{(1,1)}\mathcal{D}_{ji}^{(1,1)}\Tr(\Delta^{ij}G\Delta^{ab}G')\Big|\prec (\Psi(z,z'))^2\,.
\end{align}

In sum, choosing $l\ge 5$, we get from~\eqref{le contribution 2}, ~\eqref{le contribution 3},~\eqref{le contribution 1bis}, and~\eqref{le contribution 1} that
\begin{align}\label{le hws}
\E\mathcal{X}(z,z')=\frac{1}{N}\E \Tr \caH G(z)\caH G(z')=m_{sc}(z)m_{sc}(z')+O_\prec\big((\Psi(z,z'))^2\big)+O_\prec(\frac{1}{\sqrt{N}})\,.
\end{align}
Using linear combinations, Lemma~\ref{le lemma expectation final} follows directly from~\eqref{le hws}.
\end{proof}

\section{Proof of Proposition~\ref{le main proposition}}\label{section proof of proposition}

In the previous section we identified the expectation of $\langle \caH G\caH G'\rangle$ in Lemma~\ref{le lemma expectation final}. In the current section, we will control the higher moments of $\langle \caH G\caH G'\rangle$ to obtain a high probability bound required to prove Proposition~\ref{le main proposition}.

\begin{pro}\label{le proposition}
Under the assumptions of Theorem~\ref{le main theorem}, we have
\begin{align}\label{le proposition equation}
 \caX(z,z')=m(z)m(z')+O_{\prec}(1)\,,
\end{align}
uniformly in $z,z'\in\mathcal{E}$.
\end{pro}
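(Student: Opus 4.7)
My approach is to split
\begin{align*}
\caX(z,z') - m(z)m(z')
= \bigl(\caX(z,z') - \E\caX(z,z')\bigr)
+ \bigl(\E\caX(z,z') - m_{sc}(z)m_{sc}(z')\bigr)
+ \bigl(m_{sc}(z)m_{sc}(z') - m(z)m(z')\bigr).
\end{align*}
The middle bracket is $O_\prec(\Psi(z,z')^2) + O_\prec(N^{-1/2}) = O_\prec(1)$ by~\eqref{le hws}. For the last bracket, the local law~\eqref{le local law} yields $m(z) = m_{sc}(z) + O_\prec(\Psi(z)^2)$, and since $m_{sc}$ is uniformly bounded on $\mathcal{E}$, the triangle inequality gives $m(z)m(z') - m_{sc}(z)m_{sc}(z') = O_\prec(1)$. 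Thus Proposition~\ref{le proposition} reduces to the concentration estimate $\caX(z,z') - \E\caX(z,z') = O_\prec(1)$ uniformly in $z,z'\in\mathcal{E}$.

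The plan for concentration is the standard high-moment method: for any fixed integer $p\ge 1$ I would establish $\E|\caX(z,z') - \E\caX(z,z')|^{2p}\prec 1$ and conclude via Markov's inequality, with uniformity in $z,z'\in\mathcal{E}$ secured by an $N^{-C}$-net argument using that $z\mapsto G(z)$ is deterministically Lipschitz with constant $|\eta|^{-2}$. The $2p$-th moment can be expanded via the trace representation~\eqref{le for sum convetion} as a sum indexed by $4p$ matrix indices whose summands are products of $4p$ entries of $\mathcal{H}$ times $4p$ Green function entries at the spectral parameters $z,\bar z, z', \bar z'$. One then iterates Corollary~\ref{cor cumulant expansion}, peeling off one $\mathcal{H}$ factor at a time exactly as in the proof of Lemma~\ref{le lemma expectation final}. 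The decisive algebraic input remains $\caD_{ji}^{(1,0)}G=0$ and $\caD_{ji}^{(0,1)}=0$ from~\eqref{rule 2bis}: every first-order derivative is forced either to act on another $\mathcal{H}$ factor (producing a Kronecker-delta pairing with weight $1/N$ from the second cumulant) or to be replaced by a higher-order cumulant carrying an extra $N^{-1/2}$. The subtraction of $\E\caX$ eliminates the fully-paired vacuum contribution, so that after power counting based on $|G_{ij}(z)|\prec \delta_{ij}+\Psi(z)$ every surviving diagram contributes at most $O_\prec(1)$ to the $2p$-th moment.

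The main obstacle is the combinatorial bookkeeping at arbitrary order $p$. Whereas the first-moment computation of Section~\ref{section expectation} leaves only a handful of terms after cancellation, the $2p$-th moment must be organised by the graph of pairings and higher-cumulant clusters on the $4p$ $\mathcal{H}$-edges over the vertex set $\llbracket 1,N\rrbracket$. One has to verify systematically that (i) the cancellation $\caD_{ji}G=0$ propagates through every stage of iterated differentiation, leaving only Gaussian-type pairings of the $\{a,b\}=\{i,j\}$ type already encountered in~\eqref{le contribution 1bis}; (ii) coincident-index diagrams are subleading, gaining a factor $1/N$ per index coincidence; and (iii) higher-order cumulants are controlled by their $N^{-1/2}$ gains despite growing combinatorial multiplicity. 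Each individual check is standard for cumulant-based moment methods on Wigner ensembles, but the aggregate diagrammatic bookkeeping constitutes the technical bulk of the argument.
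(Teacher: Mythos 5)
Your reduction to a concentration statement is natural, and you correctly identify the decisive algebraic input $\caD^{(1,0)}_{ji}G=0$, but the high-moment plan as stated has a genuine gap in the choice of centering. When you expand $\E\bigl[(\caX-\E\caX)\,P(D-1,D)\bigr]$ via cumulant expansion in $\caH_{i_1j_1}$, the term in which $\caD^{(1,0)}_{j_1i_1}$ hits the twin $\caH_{a_1b_1}$ produces
\begin{align*}
\frac{1}{N^2}\sum_{i_1,j_1}\E\bigl[G_{j_1j_1}(z)\,G_{i_1i_1}(z')\,P(D-1,D)\bigr]
=\E\bigl[m(z)m(z')\,P(D-1,D)\bigr]\,,
\end{align*}
i.e.\ a factor of the \emph{random} quantity $m(z)m(z')$, not of the deterministic $\E\caX$. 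Hence your subtraction does not cancel exactly; you are left with a residual $\E\bigl[(m(z)m(z')-\E\caX)\,P(D-1,D)\bigr]$, and controlling $\E|P(D-1,D)|^{2D/(2D-1)}$ is circular (it equals $\E|\caX-\E\caX|^{2D}$, the very quantity you are trying to bound). The paper sidesteps this by setting $\varkappa(z,z'):=m(z)m(z')$ (a random centering) and working with $P(n,m)=(\caX-\varkappa)^n\overline{(\caX-\varkappa)}^m$, so that the twin-pairing contribution cancels $-\varkappa\,P(D-1,D)$ \emph{identically}, with no estimates needed, and the iteration is clean. Once this substitution is made, the rest of your outline — iterating Corollary~\ref{cor cumulant expansion}, exploiting $\caD^{(1,0)}G=0$ and $\caD^{(0,1)}=0$, losing a factor $1/N$ per index coincidence, gaining $N^{-1/2}$ per higher-order cumulant, and tracking that every cumulant expansion of order one must be matched by an index collapse — is essentially the paper's argument. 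Since $\E\caX-m(z)m(z')=O_\prec(1)$, your reduction is morally equivalent; the gap is localized in treating $\E\caX$, rather than $m(z)m(z')$, as the correct self-normalizing subtrahend inside the moment computation. (Also, a minor bookkeeping slip: the $2p$-th moment carries $8p$ free matrix indices, not $4p$, since each factor of $\caX$ contributes four.)
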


\begin{proof}
 We rewrite $\caX$ as
\begin{align}
 \mathcal{X}(z,z')=\frac1N\sum_{ijab} \caH_{ij}\Tr(\Delta^{ij}G\Delta^{ab}G')\caH_{ab}=:\frac1N\sum_{ijab} \caH_{ij}X^{ijab}	\caH_{ab}\,,
\end{align}
where we introduced
\begin{align}\label{le funny X}
 X^{ijab}\equiv \Tr(\Delta^{ij}G\Delta^{ab}G')=G_{ja}G'_{bi}\,.
\end{align}
As in Section~\ref{section expectation}, we assume for the moment that $\E h_{\iota,ij}^2=0$, $\iota=1,2$, $i,j\in\llbracket1,N\rrbracket$. This implies that $\kappa_{\iota,ij}^{(0,2)}=\kappa_{\iota,ij}^{(2,0)}=0$ as well as $\mathcal{D}_{ji}^{(0,1)}\equiv 0$. We are going to explain in Section~\ref{remark about variance complex case} how this additional assumption can easily be dropped. 

Next, we observe from~\eqref{rule 1} and~\eqref{rule 2bis} that
\begin{align}\label{le kappa rule}
\frac{1}{N^2} \sum_{ijab}\mathcal{D}_{ji}^{(1,0)}X^{ijab}\caH_{ab}=\frac{1}{N^2}\sum_{ij} G_{ii}G_{jj}'=m(z)m(z'):=\varkappa(z,z')\,,
\end{align}
where we introduce the shorthand $\varkappa$. For $n,m\in\N$, define
\begin{align}\label{le P recursive}
 P(n,m):=(\caX-\varkappa)^n\overline{(\caX-\varkappa)}^m\,.	
\end{align}

Fix a (large) $D\in\N$, $z,z'\in\mathcal{E}$, and consider
\begin{align}\label{le monster}
 \E P(D,D)&=\sum_{\mathbf{i}\mathbf{j}\mathbf{a}\mathbf{b}}\E\Bigg[\prod_{n=1}^D\Big(\frac{1}{N}\caH_{i_{n}j_{n}}X^{i_{n}j_{n}a_{n}b_{n}}\caH_{a_{n}b_{n}}-\frac{1}{N^2}\delta_{j_na_n}\delta_{i_nb_n}G_{i_ni_n}G_{j_nj_n}\Big)\nonumber\\ &\qquad\qquad\times\prod_{n=D+1}^{2D}\Big(\frac{1}{N}\caH_{j_{n}i_{n}}\overline{X^{i_{n}j_{n}a_{n}b_{n}}}\caH_{b_{n}a_{n}}-\frac{1}{N^2}\delta_{j_na_n}\delta_{i_nb_n}\overline{G_{i_ni_n}}\,{\overline{G_{j_nj_n}}}\Big)\Bigg]\,,
\end{align}
where $\mathbf{i}=(i_1,i_2,\ldots, i_{2D})\in\llbracket 1,N\rrbracket ^{2D}$, and similarly $\mathbf{j},\mathbf{a},\mathbf{b}\in\llbracket 1,N\rrbracket^{2D}$ are $8D$ free summation indices corresponding to $4D$ factors of $\caH$'s. In the expression above, we call, for each $n$, $\caH_{i_nj_n}$ and $\caH_{a_nb_n}$ `twins'.

We now successively use the cumulant expansions from Corollary~\ref{cor cumulant expansion} to expand the summands in~\eqref{le monster} in all the factors of $\caH$'s. We start by expanding in the variable $\mathcal{H}_{i_1j_1}$ to obtain
\begin{align}\label{le 6.8}
 &\E\Big[\big(\caX(z,z')-\varkappa(z,z')\big)P(D-1,D)\Big]\nonumber\\&\qquad=\frac{1}{N}\sum_{p_1+q_1=1}^l\sum_{i_1j_1a_1b_1}\frac{1}{N^{\frac{p_1+q_1+1}{2}}} \E\bigg[ \mathcal{D}_{j_1i_1}^{(p_1,q_1)}\Big[X^{i_1j_1a_1b_1}\caH_{a_1b_1} P(D-1,D)\Big]\bigg]\nonumber\\
 &\qquad\qquad-\E\Big[ \varkappa P(D-1,D)\Big]+\E\Big[O_\prec(N^{\frac{-l+4}{2}})P(D-1,D)\Big]\,.
\end{align}
First, using that $|\caX|\le \eta_0^{-2}\le N^2$ and $|\varkappa|\le\eta_0^{-2}\le N^2$, with $\eta_0=\min\{|\Im z|,\,|\Im z'|\}$, the third term on the right of~\eqref{le 6.8} is bounded as $O_\prec(N^{\frac{-l+4}{2}+(4D-2)})$, hence for $l\ge 10D$, that error term is bounded as $O_\prec\big((\frac{1}{\sqrt{N}})^{2D}\big)$. Here, we also tacitly used, as we will do repeatedly below, item $(3)$ of Lemma~\ref{dominant} to justify the estimate. Second, in the first term on the right, for $p+q=1$, we consider the derivation $\caD^{(1,0)}_{j_1i_1}$ (recall from~\eqref{the first few kappas} that $\caD^{(0,1)}_{j_1i_1}=0$). When $\caD^{(1,0)}_{j_1i_1}$ acts on a Green function in $X^{i_1j_1a_1b_1}$ we get a zero contribution thanks to~\eqref{rule 2bis}. If $\caD^{(1,0)}_{j_1i_1}$ acts on its twin $\caH_{a_1b_1}$ we generate by~\eqref{le kappa rule} the term $\E[ \varkappa P(D-1,D)]$ which will precisely cancel with the second term on the right side of~\eqref{le 6.8}. 

Thus, choosing $l\ge 10D$, we have
\begin{align}\label{le it aint me}
\E P(D,D)&=\frac{1}{N^2}\sum_{i_1j_1a_1b_1}\E\Big[X^{i_1j_1a_1b_1}\caH_{a_1b_1} \mathcal{D}_{j_1i_1}^{(1,0)} P(D-1,D)\Big]\nonumber\\
&\qquad+\frac{1}{N}\sum_{p_1+q_1=2}^l\sum_{i_1j_1a_1b_1}\frac{1}{N^{\frac{p_1+q_1+1}{2}}} \E\bigg[ \mathcal{D}_{j_1i_1}^{(p_1,q_1)}\Big[X^{i_1j_1a_1b_1}\caH_{a_1b_1} P(D-1,D)\Big]\bigg]\nonumber\\
 &\qquad+O_\prec\Big(\big(\frac{1}{\sqrt{N}}\big)^{2D}\Big)\,.
\end{align}

 Consider now the first term on the right side of~\eqref{le it aint me}. When $\mathcal{D}_{j_1i_1}^{(1,0)}$ acts on $P(D-1,D)$ it either acts on a Green function entry $G_{i_nj_n}$ or $G_{a_nb_n}$,  or it acts on $\mathcal{H}_{i_nj_n}$ or $\caH_{a_nb_n}$, $n\in \llbracket 2,\ldots, 2D\rrbracket$. In the former case we get by~\eqref{rule 2bis} a zero contribution, in the latter case by~\eqref{rule 1} the number of free summation indices in $P(D-1,D)$ gets reduced from $4(2D-1)$ to $4(2D-1)-2$. Bearing this in mind, we expand the first term on the right side of~\eqref{le it aint me} using $\caH_{a_1b_1}$ to obtain, with $l\ge 10D$,
 \begin{align}\label{le eng2}
 &\frac{1}{N^2}\sum_{i_1j_1a_1b_1}\E\Big[X^{i_1j_1a_1b_1}\caH_{a_1b_1} \mathcal{D}_{j_1i_1}^{(1,0)} P(D-1,D)\Big]\nonumber\\&\qquad=\frac{1}{N^2}\sum_{i_1j_1a_1b_1}\sum_{p_2+q_2=1}^l\frac{1}{N^{\frac{p_2+q_2+1}{2}}}\E\bigg[\mathcal{D}_{b_1a_1}^{(p_2,q_2)}\Big[X^{i_1j_1a_1b_1}\mathcal{D}_{j_1i_1}^{(1,0)} P(D-1,D)\Big]\bigg]+O_\prec\Big(\big(\frac{1}{\sqrt{N}}\big)^{2D}\Big)\,,\nonumber\\
 &\qquad=\frac{1}{N^2}\sum_{i_1j_1a_1b_1}\frac{1}{N}\E\Big[X^{i_1j_1a_1b_1}\mathcal{D}_{b_1a_1}^{(1,0)}\mathcal{D}_{j_1i_1}^{(1,0)} P(D-1,D)\Big]\nonumber\\ &\qquad\qquad+\frac{1}{N^2}\sum_{i_1j_1a_1b_1}\sum_{p_2+q_2=2}^l\frac{1}{N^{\frac{p_2+q_2+1}{2}}}\E\bigg[\mathcal{D}_{b_1a_1}^{(p_2,q_2)}\Big[X^{i_1j_1a_1b_1}\mathcal{D}_{j_1i_1}^{(1,0)} P(D-1,D)\Big]\bigg]\nonumber\\&\qquad\qquad+O_\prec\Big(\big(\frac{1}{\sqrt{N}}\big)^{2D}\Big)\,,
 \end{align}
where we used~\eqref{rule 2bis}. For the first term on the right side, the number of free summation indices in $\mathcal{D}_{b_1a_1}^{(1,0)}\mathcal{D}_{j_1i_1}^{(1,0)} P(D-1,D)$ is  $4(2D-1)-4$ by~\eqref{rule 1} and~\eqref{rule 2bis}. Or put differently, there are $2(2D-1)-2$ factors of $\caH$'s left that we can use in cumulant expansions.

For the second term on the right side of~\eqref{le eng2}, we either get a zero contribution when $\mathcal{D}_{j_1i_1}^{(1,0)}$ acts on a Green function entry of $P(D-1,D)$, or the number of free summation indices gets reduced by two if $\mathcal{D}_{j_1i_1}^{(1,0)}$ acts on a factor of $\caH$. For the higher derivative terms in~$\mathcal{D}_{b_1a_1}^{(p_2,q_2)}$, with $p_2+q_2\ge 2$, acting on $X^{i_1j_1a_1b_1}\mathcal{D}_{j_1i_1}^{(1,0)} P(D-1,D)$, either the number of Green function entries is increased by one for each derivative hitting a Green function entry, or the number of free summation indices is reduced by two for each derivative hitting a factor $\caH$. We have now expanded the first term on the right of~\eqref{le it aint me} in $\caH_{i_1j_1}$ and $\caH_{a_1b_1}$. Before we go on and expand the remaining $\caH$'s in $P(D-1,D)$, we return to second term on the right of~\eqref{le it aint me}.

Consider the second term on the right side of~\eqref{le it aint me}. Since $p_1+q_1\ge 2$, we do not have further cancellations from~\eqref{rule 2bis} in $\mathcal{D}_{j_1i_1}^{(p_1,q_1)}X^{i_1j_1a_1b_1}\caH_{a_1b_1} P(D-1,D) $. If one of the derivatives in~$\mathcal{D}_{j_1i_1}^{(p_1,q_1)}$ acts on $\caH_{a_1b_1}$, the number of free summation indices is reduced by two, if none of the derivatives act on $\caH_{a_1b_1}$, we use a cumulant expansion in $\caH_{a_1b_1}$ stopped at order $l\ge 10D$. The leading term containing $\caD_{b_1a_1}^{(1,0)}$ will then either give a zero contribution if it acts on any Green function entry by~\eqref{rule 2bis} or it will reduce the number of free summation indices by two. For the terms containing~$\caD_{b_1a_1}^{(p_2,q_2)}$, $p_2+q_2\ge 2$, we have no cancellation due to~\eqref{rule 2bis} but the number of free summation indices gets reduced by two for each derivative acting on a factor $\caH$.

To sum up, after performing all the derivatives by Leibniz rule, the terms on the right side of~\eqref{le it aint me} can be classified by the number of collapses, $M$, of two free summation indices when $\caH_{i_1j_1}$ or $\caH_{a_1b_1}$ act on some other $\caH$'s (except their own twin), and the number of cumulant expansions $L$ in total; the number of cumulant expansions, $L_1$, starting from order one, i.e.\ with $p_n+q_n\ge 1$; and the number of cumulant expansions, $L_2$, starting from order two, i.e.\ with $p_n+q_n\ge 2$. For the moment either $L=1$ or $2$, with $L_1+L_2=L$.  Because of the bounds $|G_{ij}(z)|\prec 1$, $\|G(z)\|\le|\eta|^{-1}\le N$ and Lemma~\ref{dominant}$(3)$, we may ignore the number of Green function entries in the power counting and do not keep track of them.

We have now fully expanded~\eqref{le it aint me} in terms of $\caH_{i_1j_1}$ and $\caH_{a_1b_1}$. We will continue expanding in the remaining $\caH$'s while keeping track of the numbers $M$, $L_1$ and $L_2$ introduced above.

Pick now one of the resulting terms from above, if that term contains $\caH_{i_2j_2}$ and its twin $\caH_{a_2b_2}$ we expand first in $\caH_{i_2j_2}$. When $\caD^{(0,1)}_{b_2a_2}$ acts on $\caH_{a_2b_2}$ we get the cancellation with $\varkappa$ from~\eqref{le kappa rule}, so that we are left with a cumulant expansion with $p_2+q_2\ge 2$ only. In case the twin $\caH_{a_2b_2}$ is missing, we note that the number of free summation indices has already been reduced by two. If we pick a term that does not contain $\caH_{i_2j_2}$, we go on and expand in the next $\caH$, $\caH_{a_2b_2}$ or if missing the next available $\caH$. In this way we successively expand all factors $\caH$'s, except those appearing in the error term of a cumulant expansion cut at order $l\ge 10D$.

A resulting fully expanded term containing no more $\caH$'s is then classified by the total number of collapses of free summation indices, $M$, resulting from~\eqref{rule 1}. The number of free summation indices in such a term is $8D-2M$ whereas the number of total cumulant expansion, $L$, in that term is $4D-M$. As above, let $L_1$ be the number of cumulant expansions with $p_n+q_n= 1$ and let $L_2$ be the number of cumulant expansions with $p_n+q_n\ge 2$. Note that $L_2=4D-M-L_1$.

Hence a fully expanded term with given $M$, $L_1$ and $L_2$ gives a contribution to~\eqref{le monster} bounded~by
\begin{align}\label{the big outcome}
&\frac{1}{N^{2D}}N^{8D-2M}\Big[\sum_{p_n+q_n=1}^l\Big(\frac{1}{\sqrt{N}}\Big)^{p_n+q_n+1}\Big]^{L_1} \Big[\sum_{p_n+q_n=2}^l\Big(\frac{1}{\sqrt{N}}\Big)^{p_n+q_n+1}\Big]^{L_2}\nonumber\\
&\qquad\prec N^{6-2M}N^{-L_1}N^{-\frac32 L_2}\nonumber\\
&\qquad=\Big(\frac{1}{\sqrt{N}}\Big)^{-M+L_1}\,,
\end{align}
where we used that $|G_{ij}|\prec 1$, as well as $\|G(z)\|\le \frac{1}{|\eta|}\le N$ with probability one and that there are no more $\caH$'s in a fully expanded term so that by Lemma~\ref{dominant} we get the first line. To obtain the second line we used that $D$ and $l\ge 10D$ are fixed numbers, and for the third line we used that $L_2=4D-M-L_1$. Summarizing, so far we have expanded~\eqref{le monster} in all the factors $\caH$ and showed that each resulting fully expanded term with given $M$, $L_1$ and $L_2$ is bounded by~\eqref{the big outcome}.

 We next claim that $L_1\le M$ for any fully expanded term. Indeed if for some pair of indices $i_nj_n$ or $a_nb_n$ there is no collapse, meaning that the derivatives in $\caD^{(p_n,q_n)}_{j_ni_n}$ (or $\caD^{(p_n,q_n)}_{b_na_n}$) exclusively acted on Green function entries, then we have due to~\eqref{rule 2bis} that $p_n+q_n\ge 2$ in order to get a non-zero contribution. 
 
 Thus we reach the maximum for $M=L_1$ in~\eqref{the big outcome}, and the term is stochastically dominated by one, i.e.\ each fully expanded term is stochastically bounded by one. The number of generated terms in the expansion is bounded by $(CD)^{cD}$ if we choose $l$ to be proportional to $D$. 
 
 It follows that 
 \begin{align}
  \E P(D,D)=\E|\caX(z,z')-\varkappa(z,z')|^{2D}\prec 1\,,
 \end{align}
for any $D$, hence by Markov's inequality we have
\begin{align}\label{le markov1}
 |\caX(z,z')-\varkappa(z,z')|\prec 1\,,
 \end{align}
which was to be proven for fixed $z,z'\in\mathcal{E}$.

It remains to extend this bound to a uniform bound for all $z,z'\in\mathcal{E}$. Let $\mathcal{L}\subset \mathcal{E}\times\mathcal{E}$ be a lattice such that $|\mathcal{L}|=O(N^{10})$ and for any $(z,z')\in\mathcal{E}\times\mathcal{E}$ there is a $(z_0,z_0')\in \mathcal{L}$ such that $|(z,z')-(z_0,z_0')|=O(N^{-10})$. Since $\langle\caH G(z)\caH G(z')\rangle$ is Lipschitz continuous in $(z,z')$ with constant bounded by $\eta_0^{-4}\le N^{4}$, $\eta_0=\min\{|\Im z|,\,|\Im z'|\}$, as follows from~\eqref{le basic rule derivative}, the uniform estimate follows from a union bound over $\mathcal{L}$ and~\eqref{le markov1}. This concludes the proof of Proposition~\ref{le proposition equation}, modulo the assumption that $\E h_{\iota,ij}^2=0$. This condition can easily be removed as we will show in Section~\ref{remark about variance complex case}.
\end{proof}

\begin{rem}\label{le remark about variance symmeric case}
 We can strengthen the estimate~\eqref{le proposition equation} to
 \begin{align*}
 \caX(z,z')&=m(z)m(z')+O_{\prec}(\Psi^2(z,z'))+O_{\prec}\big(\frac{1}{\sqrt{N}}\big)\\
 &=m_{sc}(z)m_{sc}(z')+O_{\prec}(\Psi^2(z,z'))+O_{\prec}\big(\frac{1}{\sqrt{N}}\big)\,.
\end{align*}
To establish this, one needs to count the number of off-diagonal Green function entries generated along the expansion procedure and then use $|G_{ij}|\prec \Psi+\delta_{ij}$.

 \end{rem}

\section{Real symmetric case}\label{le real symmetric case}
 
 In this section, we outline how our results for the complex Hermitian setup carry over to the real symmetric one. We start with the analogue to Assumption~\ref{assumption 1}.

\begin{assu}\label{assumption 2}
Fix an integer $k\ge 2$. We assume that $H_\iota:=(h_{\iota,ij})$ are $k$ independent real symmetric Wigner matrices of size $N\times N$, i.e., we assume that their entries are independent centred random variables, up to the symmetry constraints $h_{\iota,ij}={h_{\iota,ji}}$, satisfying
\begin{align}\label{le variance in real symmetric case}
\mathbb{E}h_{\iota,ij}^2=\frac{1+\delta_{ij}}{N}\,,\qquad 1\leq i,j\leq N\,,\qquad \iota=1,\ldots ,k\,,
\end{align}
and the families of random variables $\{h_{\iota,ij}\}$ have finite moments to all order, i.e., they satisfy~\eqref{entries moment bounds}.
\end{assu}
 We then have the following result for the real symmetric case.

 \begin{thm}\label{le theorem for symmetric}
  Let $H$ be given by~\eqref{le H} and assume that
 $H_\iota$, $\iota=1,\ldots, k$, satisfy Assumption~\ref{assumption 2} and that $\sigma_\iota$, $\iota=1,\ldots k$, satisfy~\eqref{le sigma condition}. Then
\begin{align}\label{le main theorem equation 2}
 \Big|w_\alpha^*H_\iota w_\beta-  \sigma_\iota\lambda_\alpha\delta_{\alpha\beta}\Big|\prec\frac{1}{\sqrt{N}}\,,
\end{align}
for all $\alpha,\beta\in\llbracket 1,N\rrbracket$ and $\iota\in\llbracket 1,k\rrbracket$.
\end{thm}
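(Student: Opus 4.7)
\section{Proof plan for Theorem~\ref{le theorem for symmetric}}

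The strategy follows the complex Hermitian proof in Sections~\ref{section proof of maintheorem}--\ref{section proof of proposition} line by line, with the cumulant expansion and the differential rules replaced by their real-variable counterparts. The plan is to establish the real symmetric analogue of Proposition~\ref{le main proposition}, namely $\langle \mathcal{H}\Im G(z_1)\mathcal{H}\Im G(z_2)\rangle\prec 1$ uniformly on $\mathcal{E}$, from which~\eqref{le main theorem equation 2} follows by specialising $z_1,z_2$ at $\lambda_\alpha,\lambda_\beta$ with $\eta_1=\eta_2=N^{-1+\epsilon}$ exactly as in the derivation of~\eqref{le matrix element}. As before it suffices to treat $k=2$ and work with the auxiliary matrix $\mathcal{H}:=\sigma_2 H_1-\sigma_1 H_2$, whose real symmetric entries satisfy $\E\mathcal{H}_{ij}^2=(1+\delta_{ij})/N$ by~\eqref{le variance in real symmetric case} and~\eqref{le sigma condition}.

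The first step is to set up the correct derivation operator. Because $h_{\iota,ij}=h_{\iota,ji}$ are no longer independent for $i\neq j$, the basic rule~\eqref{le basic rule derivative} is replaced by $\partial_{\iota,ji}G=-\sigma_\iota G(\Delta^{ji}+\Delta^{ij}-\delta_{ij}\Delta^{ii})G$. Defining $\mathcal{D}_{ji}:=\sigma_2\partial_{1,ji}-\sigma_1\partial_{2,ji}$ as in~\eqref{le D}, the crucial cancellation $\mathcal{D}_{ji}G=0$ still holds, since the combination $\sigma_2\sigma_1-\sigma_1\sigma_2$ vanishes regardless of the symmetric form of the differentiated tensor. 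One also has the analogue of~\eqref{rule 1}: for $i\neq j$, $\mathcal{D}_{ji}\mathcal{H}_{ab}=\delta_{ja}\delta_{ib}+\delta_{ia}\delta_{jb}$, while for $i=j$ it equals $\delta_{ia}\delta_{ib}$. The extra transposition term $\delta_{ia}\delta_{jb}$ is the only new algebraic ingredient, and it simply produces an additional contracted pairing in every cumulant expansion step.

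Next, the complex cumulant expansion of Lemma~\ref{complex_cumulant} is replaced by the standard real one $\E[h f(h)]=\sum_{p=0}^{l}\frac{1}{p!}\kappa^{(p+1)}\E f^{(p)}(h)+\Omega_{l+1}$. The higher-order operator $\mathcal{D}_{ji}^{(p)}$ is defined analogously to~\eqref{le fancy D} as $\tfrac{1}{p!}N^{(p+1)/2}(\sigma_2\kappa_{1,ji}^{(p+1)}\partial_{1,ji}^p-\sigma_1\kappa_{2,ji}^{(p+1)}\partial_{2,ji}^p)$, and one checks $\mathcal{D}_{ji}^{(1)}G=0$ as above. With this set-up one reproduces the expansion of $\E\mathcal{X}(z,z')$ in Section~\ref{section expectation}. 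The leading term $\frac{1}{N^2}\sum_{ijab}\E\mathcal{D}_{ji}^{(1)}[X^{ijab}\mathcal{H}_{ab}]$ now contains two contractions per pair, coming from both $\delta_{ja}\delta_{ib}$ and $\delta_{ia}\delta_{jb}$; the first reproduces $\E m(z)m(z')$ as in~\eqref{le needed in real case too}, while the second contributes $\frac{1}{N^2}\sum_{ij}\E G_{ji}G'_{ji}=\frac{1}{N}\E\langle GG'\rangle$, which by the local law~\eqref{le local law} is $O_\prec(\Psi^2(z,z'))$ plus a deterministic contribution of size $O(1/N)$ and therefore does not affect~\eqref{le claim}. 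The truncation error, the off-diagonal variance correction from $\delta_{ij}$, and the non-Gaussianity correction when $\E h_{\iota,ij}^2$ is modified along the lines of Section~\ref{remark about variance complex case} are all controlled by the same power counting used in~\eqref{le contribution 3} and~\eqref{le contribution 2}.

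The high-moment estimate for $\caX-\varkappa$ in Section~\ref{section proof of proposition} carries over essentially verbatim. One expands $\E P(D,D)$ through successive real cumulant expansions, again classifying fully expanded terms by the number $M$ of summation-index collapses and the numbers $L_1,L_2$ of first- and higher-order cumulant expansions. The vanishing $\mathcal{D}_{ji}^{(1)}G=0$ enforces the constraint $L_1\le M$ exactly as before, so the final bound~\eqref{the big outcome} is unchanged and one obtains $|\caX(z,z')-m(z)m(z')|\prec 1$. Combining with the argument leading from~\eqref{le spectral decomposition} to~\eqref{le matrix element} yields~\eqref{le main theorem equation 2}. The main technical obstacle is bookkeeping the additional transposition terms coming from the symmetric differentiation rule: each application of $\mathcal{D}_{ji}^{(p)}$ now doubles into two graph contractions, one of which creates an extra off-diagonal Green function entry. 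These extra entries are each of size $O_\prec(\Psi)$ and so only tighten the bounds; no new cancellation mechanism is required.
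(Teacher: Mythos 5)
Your proposal is correct and follows essentially the same route as the paper's own argument in Section~\ref{le real symmetric case}: real cumulant expansion, the operator $\caD_{ji}^{(p)}$ built from real cumulants, the preserved cancellation $\caD_{ji}^{(1)}G=0$, the extra transposition pairing $\delta_{ia}\delta_{jb}$ in $\caD_{ji}^{(1)}\caH_{ab}$, and unchanged power counting with $L_1\le M$. The one place where you say the high-moment argument ``carries over essentially verbatim'' is a bit too quick: if you keep $\varkappa=m(z)m(z')$, the twin contraction $\caD_{j_1i_1}^{(1)}\caH_{a_1b_1}$ now produces $\E[m m'P(D-1,D)]$ \emph{plus} the residual $\E\big[\tfrac{1}{N^2}\sum_{ij}G_{ij}G'_{ij}\,P(D-1,D)\big]$, and only the first piece cancels against the subtraction in $P$; the expansion is then no longer self-normalizing and the residual has to be carried through. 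The clean fix, which is what the paper does in~\eqref{le kappa real case}, is to redefine $\varkappa(z,z'):=m(z)m(z')+\tfrac{1}{N^2}\sum_{ij}G_{ij}(z)G_{ji}(z')$ so that the twin contraction cancels exactly; since the added term is $O_\prec(\Psi^2+1/N)=O_\prec(1)$ by the local law, your final bound $|\caX-m m'|\prec 1$ is still what comes out. With that adjustment the plan is complete.
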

\nc

\begin{proof}
In the following we sketch the proof of Theorem~\ref{le theorem for symmetric} for $k=2$\nc. First, we define the cumulants, $\kappa_{\iota,ij}^{(p)}=\kappa_{\iota,ji}^{(p)}$ for the real random variables $h_{\iota,ij}$ as
\begin{align}
\kappa_{\iota,ij}^{(p)}:=(-\ii)^p\frac{\partial^p}{\partial s^p}\log \E\, \mathrm{e}^{\ii s h_{\iota,ij}}\Bigg|_{s=0}\,,
\end{align}
and note that they satisfy the estimate~\eqref{le size of kappa}. 

Second, we introduce the real symmetric analogue to $\caD_{ji}^{(p,q)}$ by setting
\begin{align}\label{le symmetric D}
 \caD_{ji}^{(p)}:= \frac{1}{p!}N^{\frac{p+1}{2}}\Big(\sigma_{2}\kappa_{1,ji}^{(p+1)}\partial_{1,ji}^p-\sigma_{1}\kappa_{2,ji}^{(p+1)}\partial_{2,ji}^p\Big)\,,\qquad\qquad p\in\N\,.
\end{align}

With these definitions we obtain the following cumulant expansion formula for the real symmetric case: Let $F$ be a monomial in the Green function entries and entries of $\caH$ as in Corollary~\ref{cor cumulant expansion}, then we have for any $l\in\N$,
\begin{align}\label{le cumulant expansion real case}
  \E \caH_{ij} F=\sum_{p=1}^l\frac{1}{N^{\frac{p+1}{2}}} \E \mathcal{D}_{ji}^{(p)}F+\E\Omega_{l+1}(F)\,,
 \end{align}
where the error term satisfies the bound
\begin{align}
|\E \Omega_{l+1}(F)|\prec N^{-(l+2)/2}\,.
\end{align}

Third, we recall that the basic differentiation rule for the real symmetric setup;
\begin{align}\label{le basic differential rule real}
  \partial_{\iota,ji}G(z)=-G(z)\sigma_{\iota}\Delta^{ji}G(z)-G(z)\sigma_{\iota}\Delta^{ij}G(z)\,,\qquad\quad \iota=1,2\,.
\end{align}
It is then easy to check that we have the computational rules
\begin{align}\label{le rule 2 real}
 \caD_{ji}^{(1)}G(z)=0\,,
\end{align}
as well as
\begin{align}\label{le rule 1 real}
 \caD_{ji}^{(1)}\caH_{ab}=\delta_{ja}\delta_{ib}+\delta_{ia}\delta_{jb}\,,
\end{align}
where $\caH_{ab}=\sigma_2H_1-\sigma_1H_2$ and where we used~\eqref{le sigma condition} and~\eqref{le variance in real symmetric case}.
 
 Armed with these definitions and rules, we turn to the computation of $\E\langle \caH G(z)\caH G(z')\rangle$. We follow the computation in Section~\ref{section expectation} up to~\eqref{le needed in real case too} that now becomes 
 \begin{align}\label{le where do we go}
\frac{1}{N^2} \sum_{ijab}\E\mathcal{D}_{ji}^{(1)}\Big[\Tr(\Delta^{ij}G\Delta^{ab}G')\caH_{ab}\Big]&=\frac{1}{N^2}\sum_{ijab}\E\Tr (\Delta^{ij}G\Delta^{ab}G')\mathcal{D}_{ji}^{(1)}\caH_{ab}\nonumber\\
& =\frac{1}{N^2}\sum_{ijab}\E\Tr (\Delta^{ij}G\Delta^{ab}G')(\delta_{ja}\delta_{ib}+\delta_{ia}\delta_{jb})\nonumber\\
 &=\E m(z)m(z')+\frac{1}{N^2}\sum_{ij}\E G_{ij}G_{ji}'\nonumber\\
 &=\E m(z)m(z')+O_\prec\big(\Psi(z,z')^2\big)\,,
 \end{align}
where we used the local law for the Green function in~\eqref{le local law} to get the last line; where we  used the fact that Theorem~\ref{theorem local law1} holds for real symmetric Wigner matrices as well. The only change was the additional error term $O_\prec\big(\Psi(z,z')^2\big)$ in~\eqref{le where do we go}. Following the computation in Section~\ref{section expectation} further, we conclude that Lemma~\ref{le lemma expectation final} holds in the real symmetric setup, too.

We move on to bound the higher moments of $\langle\caH G\caH G'\rangle$ following the arguments in Section~\ref{section proof of proposition}. Due to the modified rule~\eqref{le rule 1 real} in the real setup, we redefine $\varkappa(z,z')$ from~\eqref{le kappa rule} as
\begin{align}\label{le kappa real case}
 \varkappa(z,z'):=m(z)m(z')+\frac{1}{N^2}\sum_{ij}G_{ij}(z)G_{ji}(z')\,,
\end{align}
so that
\begin{align}\label{le kappa rule real case} 
\frac{1}{N^2} \sum_{ijab}\mathcal{D}_{ji}^{(1)}X^{ijab}\caH_{ab}=\varkappa(z,z')\,,
\end{align}
 holds with the adapted notation where $X^{ijab}$ given in~\eqref{le funny X}. This modification of $\varkappa$ ensures that $\caX(z,z')-\varkappa(z,z')$ is a self-normalizing quantity, i.e., in the computation of $\E P(D,D)$, with $P$ from~\eqref{le P recursive}, when some $\caH_{i_nj_n}$ acts on its twin $\caH_{a_nb_n}$ we get a zero contribution to $\E P(D,D)$ as in the complex Hermitian computation.
 
Yet, if some  $\caH_{i_nj_n}$ acts on another $\caH$ which is not its own twin, then we get an additional contribution from the second term on the right side of~\eqref{le rule 1 real} which is absent in the complex case. However, when this happens the number of free summation indices is reduced by two and we continue to expand the resulting term in the same way as in the complex case. Thus the modified rule~\eqref{le rule 1 real} produces more terms in the expansion of $\E P(D,D)$, but after all terms are fully expanded in the $\caH$'s, the size of the terms are estimated by the same power counting as in the complex Hermitian case. In this way one obtains that
\begin{align}
 \langle \caH \im G(z)\caH \im G(z')\rangle= \im m(z)\im m(z')+O_{\prec}(1)\,,
\end{align}
uniformly in~$z,z'\in\mathcal{E}$, similar to Proposition~\ref{le proposition}. The proof of Theorem~\ref{le theorem for symmetric} is then concluded in the same way as in Section~\ref{section proof of maintheorem}. \end{proof}

\section{Complex case revisited}\label{remark about variance complex case}

In this last section, we return to the complex Hermitian case. In the proof of Proposition~\ref{le main proposition} in Section~\ref{section proof of proposition}, we assumed for simplicity that $\E h_{\iota,ij}^2={\kappa_{\iota,ij}^{(0,2)}}=0$. In this section, we explain how this assumption can be removed. Even if ${\kappa_{\iota,ij}^{(0,2)}}\not=0$ and hence $\mathcal{D}_{ji}^{(0,1)}\not\equiv0$, we have
\begin{align}
 \caD_{ji}^{(1,0)}G(z)=\caD_{ji}^{(0,1)}G(z)=0\,,
\end{align}
similar to~\eqref{rule 2bis}. Further,~\eqref{rule 1} is modified as
\begin{align}
\caD_{ji}^{(0,1)}\caH_{ab}=\delta_{ib}\delta_{ja}\,,\qquad \caD_{ji}^{(1,0)}\caH_{ab}=\sigma_2^2\kappa_{1,ji}^{(0,2)}\delta_{ia}\delta_{jb}+\sigma_1^2\kappa_{2,ji}^{(0,2)}\delta_{ia}\delta_{bj}\,.
\end{align}
Since the cumulant expansions of Corollary~\ref{complex_cumulant} remain valid, it is straightforward to check that Proposition~\ref{le proposition} holds true also when ${\kappa_{\iota,ij}^{(0,2)}}$ do not necessarily vanish, after modifying the definition of $\varkappa$ similarly to the real symmetric case in order to obtain self-normalizing quantities in the moment bounds of $\caX-\varkappa$. More precisely, redefining
\begin{align}\label{le kappa complex case screw}
 \varkappa(z,z'):=m(z)m(z')+\frac{1}{N}\sum_{ij}\big(\sigma_2^2\kappa_{1,ji}^{(0,2)}G_{ij}(z)G_{ji}(z')+\sigma_1^2\kappa_{2,ji}^{(0,2)}G_{ij}(z)G_{ji}(z')\big)\,,
\end{align}
we find that
\begin{align}\label{le kappa rule complex case screw} 
\frac{1}{N^2} \sum_{ijab}\big(\mathcal{D}_{ji}^{(0,1)}+\mathcal{D}_{ji}^{(1,0)}\big)X^{ijab}\caH_{ab}=\varkappa(z,z')\,,
\end{align}
with $X^{ijab}$ given in~\eqref{le funny X}. We leave the further details aside. Finally, the proof of Theorem~\ref{le main theorem} from Proposition~\ref{le proposition} remains unaffected by this modification.

 \appendix
 
 \section{Proof of Lemma~\ref{lemma cumulant expansion}}\label{le appendix}
 
 \begin{proof}[Proof of Lemma~\ref{lemma cumulant expansion}] Fix the indices $i$ and $j$. We write $F\equiv F(h_{1,ij},h_{1,ji},h_{2,ij},h_{2,ji})$ to emphasize the explicit dependences. 
 From Lemma~\ref{complex_cumulant} and the definition of $\mathcal{D}^{(p,q)}_{ji}$ in~\eqref{le fancy D} we directly obtain~\eqref{le second cumulant expansion1} where $\Omega_{l+1}(F)$ is the sum of two error terms $\Omega_{1,l+1}$ and $\Omega_{1,l+2}$, the first coming from cumulant expansion with respect to $h_{1,ij}$, the second from expanding with respect to $h_{2,ij}$ in $\caH_{ij}$.  To bound the error term $\Omega_{1,l+1}$, we choose $M=N^{-1/4}$ in~\eqref{le error in complex cumulant}. Then together with the moment bounds in~\eqref{entries moment bounds}, for any (large) $D>0$, we have
 \begin{align}\label{le omega1l}
  |\Omega_{1,l+1}|&\le \frac{C_l}{N^{\frac{l+2}{2}}}\max_{p+q=l+1}\sup_{w\in\C, |w|\le N^{-1/4}}\big|\partial_w^p\partial_{\overline{w}}^q F(w,\overline{w},h_{2,ij},h_{2,ji})\big|\nonumber\\
  &\qquad+\frac{C_{l,D}}{N^{D/2}}  \Big(\E_{ij}\Big[\max_{p+q=l+1}\sup_{|w|\le |h_{1,ij}|} |\partial_w^p\partial_{\overline{w}}^qF(w,\overline{w}, h_{2,ij},h_{2,ji})|^2\Big]\Big)^{1/2}\,,
 \end{align}
 for $N$ sufficiently large, where we  used H\"older's inequality and the moment assumption~\eqref{entries moment bounds} to conclude that, for any $l$ and $D$,
\begin{align}
\E_{ij} |h_{1,ij}|^{l+2}\le \frac{C_{l}}{N^{\frac{l+2}{2}}}\,,\qquad\E_{ij} \Big[ |h_{1,ij}|^{2l+4} 1_{|h_{1,ij}|>N^{-1/4}}\Big]\le\frac{C_{l,D}}{N^{D/2}}\,,
\end{align}
for $N$ sufficiently large.

Consider next the Green function entry $G_{ab}\equiv G_{ab}(z)$ for some fixed $z\in\mathcal{E}$ and some choice of indices $a,b$. We write $G_{ab}=G_{ab}(h_{1,ij},h_{1,ji})$ in the following. Recall from the local law in~\eqref{le local law} that $|G_{ab}(h_{1,ij},h_{1,ji})|\prec \delta_{ab}+\Psi(z)$. Hence using a Neumann expansion of the resolvent we get
 \begin{align}\label{le little}
  G_{ab}(w,\overline{w})=G_{ab}(h_{1,ij},h_{1,ji})-\Tr \Big[\Delta^{ba}G(h_{1,ij},h_{1,ji})\sigma_{1}\big((w-h_{1,ij})\Delta^{ij}+(\overline{w}-h_{1,ji})\Delta^{ji}\big) G(w,\overline{w})\Big]\,.
 \end{align}
Thus denoting $$\hat\Lambda_o:=\sup_{w\in\C,|w|\le N^{-1/4}}\max_{a,b}|G_{ab}(w,\overline{w})|\,,$$ we get from~\eqref{le little} and $|h_{1,ij}|\prec\frac{1}{\sqrt{N}}$ that
\begin{align}
\hat\Lambda_o\prec 1+\frac{1}{N^{1/4}}\hat\Lambda_o\,,
\end{align}
hence $ \hat\Lambda_o\prec 1$. Next observe that $\partial_{ij}^p\partial_{ji}^q F$ is a polynomial in the Green function entries and the matrix entries of $\caH$ of degree at most $d+d'+l+2$.
Since $F$ was a monomial, 
the number of monomial summands in  $\partial_{ij}^p\partial_{ji}^q F$ depends on $p, q$ and $d+d'+l+2$
but is independent of $N$.
 Using the bounds $|\mathcal{H}_{ab}|\prec 1$ and $\hat\Lambda_o\prec 1$, we conclude that $\sup_{w\in\C, |w|\le N^{-1/4}}\big|\partial_{ij}^p\partial_{ji}^q F(w,\overline{w})\big|\prec 1$, for all $p,q$ with $p+q\le l+1$. It follows that the first term on the right side of~\eqref{le omega1l} is of order $O_\prec(N^{-(l+2)/2})$.

 To control the second term on the right side of~\eqref{le omega1l} we use once more that $\partial_{ij}^p\partial_{ji}^q F$ is finite linear combination of monomials in the Green function entries and the matrix entries of $\caH$. The maximal number of Green function entries occurring is $d+l+2$, estimating each factor by $\|G\|_\infty\le\frac{1}{\eta_0}\le N$, we get a contribution of order $N^{d+l+2}$ from the Green function entries. From the factors of $\caH_{ab}$ we use that $|\caH_{ab}|\prec 1$ and the moment bounds in~\eqref{entries moment bounds} to conclude that 
 \begin{align}
  \Big(\E_{ij}\big[\max_{p+q=l+1}\sup_{|w|\le |h_{1,ij}|} |\partial_{ij}^p\partial_{ji}^qF(w,\overline{w}, h_{2,ij},h_{2,ji})|^2\big]\Big)^{1/2}\prec N^{d+l+2}\,.
  \end{align}
  Hence choosing $D$ sufficiently large the second term on the right side in~\eqref{le omega1l} is bounded by $O_\prec(N^{-(l+2)/2})$. In sum, we have that $|\Omega_{1,l+1}(F)|\prec N^{-(l+2)/2}$. In the same way one derive the corresponding bound on $\Omega_{2,l+1}(F)$. 
\end{proof}

 \begin{proof}[Proof of Corollary~\ref{cor cumulant expansion}] Corollary~\ref{cor cumulant expansion} follows from Lemma~\ref{lemma cumulant expansion}, together with an application of item $(3)$ of Lemma~\ref{dominant} using the the estimates $|G_{ij}(z)|\prec 1$, $\|G(z)\|\le |\eta|^{-1}\le N$, and the moment assumptions in~Assumption~\ref{assumption 1} combined with H\"older's inequality.
\end{proof}

\end{document}